\renewcommand\chapnumfont{\huge\bfseries}
\newcommand{\thetitlepage}{{%
  \clearpage
  \thispagestyle{empty}
  \centering
  \Large\textsc{Bachelor Thesis}
  \vspace{0.4 cm}
  \hrule height 0.7 pt
  \vspace{0.3 cm}
  \huge\textbf{\textsc{\thetitle}}
  \vspace{0.3 cm}
  \hrule height 0.7 pt

  \vspace{0.4 cm}
  \LARGE\textsc{\theauthor}

  \vfill
  \includegraphics[width=0.4\textwidth]{\thelogo}
  \vfill

  \Large{\textbf{Supervisors} \\ \smallskip
  \thesupervisors

  \vspace{0.8 cm}
  \textbf{\theuniversity} \\ \smallskip
  \thegroup

  \vspace{0.8 cm}
  \thedate}
  \clearpage
  \thispagestyle{empty}
  \let\thefootnote\relax\footnotetext{
The original version of this thesis is available at \url{https://arxiv.org/abs/2310.19806v1}.
This version includes corrections of a few minor errors.
I would like to thank Vladimir Lifschitz for pointing out these errors.
  }
  \clearpage
}}
\newcommand{\university}[1]{\def\theuniversity{#1}}
\newcommand{\group}[1]{\def\thegroup{#1}}
\newcommand{\supervisors}[2]{\def\thesupervisors{#1 \\ #2}}
\newcommand{\logo}[1]{\def\thelogo{#1}}
\newcommand{\place}[1]{\def\theplace{#1}}
\newtheorem{theorem}{Theorem}
\newtheorem{lemma}{Lemma}
\theoremstyle{definition}
\newtheorem{definition}{Definition}[chapter]
\DeclareMathOperator{\mynot}{\mathit{not}}
\DeclareMathOperator{\myinf}{\mathit{inf}}
\DeclareMathOperator{\mysup}{\mathit{sup}}
\DeclareMathOperator{\valop}{\mathit{val}}
\newcommand{\val}[2]{\valop_{#1}(#2)}
\newcommand{\twodots}{\mathinner {\ldotp \ldotp}}
\newcommand{\atoms}{\mathcal{P}}
\newcommand{\paxioms}{\mathcal{A}}
\newcommand{\htint}[2]{\langle {#1},{#2} \rangle}
\lstdefinelanguage{input}{
	morekeywords={not},
	sensitive,
}
\lstdefinelanguage{human-readable}{
	morekeywords={and, or, not, exists, forall, true, false},
	sensitive,
}
\lstdefinelanguage{tptp}{
	otherkeywords={!, ?, \$true, \$false},
	morekeywords={},
	sensitive,
}
\lstdefinelanguage{anthem-output}{
	otherkeywords={info:},
	sensitive,
}
\newlength\excol
\newlength\extop
\newcommand{\anthem}{\textsc{anthem}}
\newcommand{\gringo}{\textsc{gringo}}
\newcommand{\clingo}{\textsc{clingo}}
\newcommand{\cvc}{\textsc{cvc4}}
\newcommand{\princess}{\textsc{princess}}
\newcommand{\vampire}{\textsc{vampire}}
\newcommand{\vampirep}{$\textsc{vampire}_\text{P}$}
\newcommand{\zipperposition}{\textsc{zipperposition}}
\title{Automated Verification of Equivalence Properties in Advanced Logic Programs}
\author{Jan Heuer}
\date{October 2020}
\begin{document}
	\frontmatter
	\thetitlepage
	\tableofcontents*
	\chapter{Abstract}
  With the increase in industrial applications using Answer Set Programming, the need for formal verification tools, particularly for critical applications, has also increased.
  During the program optimisation process, it would be desirable to have a tool which can automatically verify whether an optimised subprogram can replace the original subprogram.
  Formally this corresponds to the problem of verifying the \emph{strong equivalence} of two programs.
  In order to do so, the translation tool \anthem{} was developed.
  It can be used in conjunction with an \emph{automated theorem prover} for \emph{classical logic} to verify that two programs are strongly equivalent.
  With the current version of \anthem{}, only the strong equivalence of \emph{positive programs} with a restricted input language can be verified.
  This is a result of the translation~$\tau^*$ implemented in \anthem{} that produces formulas in the \emph{logic of here-and-there}, which coincides with classical logic only for positive programs.
  This thesis extends \anthem{} in order to overcome these limitations.
  First, the transformation~$\sigma^*$ is presented, which transforms formulas from the logic of here-and-there to classical logic.
  A theorem formalises how~$\sigma^*$ can be used to express equivalence in the logic of here-and-there in classical logic.
  Second, the translation~$\tau^*$ is extended to programs containing \emph{pools}.
  Another theorem shows how~$\sigma^*$ can be combined with~$\tau^*$ to express the strong equivalence of two programs in classical logic.
  With~$\sigma^*$ and the extended~$\tau^*$, it is possible to express the strong equivalence of logic programs containing \emph{negation}, \emph{simple choices}, and \emph{pools}.
  Both the extended~$\tau^*$ and~$\sigma^*$ are implemented in a new version of \anthem{}.
  Several examples of logic programs containing pools, negation, and simple choice rules, which the new version of \anthem{} can translate to classical logic, are presented.
  Some available theorem provers (\cvc{}, \princess{}, \vampire{}, and \zipperposition{}) are compared for their ability to verify the strong equivalence of the different programs.
  \cvc{} and \vampire{} emerged as the most viable options to use in combination with \anthem{}.

	\chapter{Zusammenfassung}
  Mit dem Anstieg industrieller Anwendungen von Antwortmengenprogrammier\-ung hat sich auch der Bedarf von Werkzeugen zur formalen Verifikation erhöht, insbesondere für kritische Anwendungen.
  Während des Prozesses der Programmoptimierung wäre es wünschenswert ein Werkzeug zu haben, welches automatisch verifizieren kann, dass ein optimiertes Subprogramm das ursprüngliche Subprogramm ersetzen kann.
  Formal entspricht dies dem Problem die \emph{starke Äquivalenz} von zwei Programmen zu verifizieren.
  Um dies zu tun, wurde das Übersetzungstool \anthem{} entwickelt.
  Es kann mit \emph{automatisierten Theorembeweisern} für \emph{klassische Logik} kombiniert werden um zu verifizieren, dass zwei Programme stark äquivalent sind.
  Mit der aktuellen Version von \anthem{} ist es aber nur möglich die starke Äquivalenz von \emph{positiven Programmen} mit einer eingeschränkten Eingabesprache zu verifizieren.
  Das ist eine Folge der Übersetzung~$\tau^*$, welche in \anthem{} implementiert ist, die Formeln in der \emph{Logik von hier-und-da} produziert, welche nur für positive Programme mit der klassischen Logik übereinstimmt.
  Diese Arbeit erweitert \anthem{} um diese Einschränkungen zu überwinden.
  Zuerst wird die Transformation~$\sigma^*$ präsentiert, welche Formeln von der Logik von hier-und-da zu klassischer Logik transformiert.
  Ein Theorem formalisiert, wie~$\sigma^*$ benutzt werden kann um Äquivalenz in der Logik von hier-und-da in klassischer Logik auszudrücken.
  Zweitens wird die Übersetzung~$\tau^*$ für Programme mit \emph{Pools} erweitert.
  Ein weiteres Theorem zeigt, wie~$\sigma^*$ mit~$\tau^*$ kombiniert werden kann um die starke Äquivalenz von zwei Programmen in klassischer Logik auszudrücken.
  Mit~$\sigma^*$ und dem erweiterten~$\tau^*$ ist es möglich die starke Äquivalenz von Programmen mit \emph{Negation}, \emph{einfachen Choices} und \emph{Pools} auszudrücken.
  Sowohl das erweiterte~$\tau^*$ als auch~$\sigma^*$ sind in einer neuen Version von \anthem{} implementiert.
  Mehrere Beispiele von Logikprogrammen die Negation, einfache Choices sowie Pools enthalten, welche die neue Version von \anthem{} in klassische Logik übersetzen kann, werden präsentiert.
  Einige Theorembeweiser (\cvc{}, \princess{}, \vampire{} und \zipperposition{}) werden auf ihre Fähigkeit die starke Äquivalenz von den verschiedenen Programmen zu verifizieren verglichen.
  \cvc{} und \vampire{} stellen sich als die besten Optionen für die Nutzung in Kombination mit \anthem{} heraus.

	\mainmatter
  \chapter{Introduction}
\label{chap:introduction}
  In recent years, many industrial applications have been built using Answer Set Programming~\cite{ErdemEtAl2016ApplicationsAnswerSet,FalknerEtAl2018IndustrialApplicationsAnswer}.
  Because of this, the need for formal verification tools for Answer Set Programs has increased, as it is of great importance to know whether a developed program is actually correct.

  One possible application for formal verification is to decide whether an improved version of a program can replace the original without changing the behaviour on the application domain.
  This is an important question, as the development of an application usually starts with a simple but correct encoding, which is later on optimised.
  While it may be enough for some applications to run some tests on the new program, it is far better and for critical applications necessary to formally verify the correctness of the new program.
  This task is made even more complex as, in other programming paradigms, it is customary to split an application into smaller program parts.
  In order to decide if the optimised version of the program can replace the original one, one has to verify whether the new program works together correctly with the other program parts on any input.
  This idea is captured by the concept of \emph{strong equivalence} between two logic programs~\cite{LifschitzEtAl2001StronglyEquivalentLogic}, which is the question whether a program can replace another one in any context without modifying the semantics.

  In order to verify this question, the translation tool \anthem{} was developed~\cite{LifschitzEtAl2019VerifyingStrongEquivalence}.
  Using \anthem{} in conjunction with an \emph{automated theorem prover} makes it possible to automate the verification of strong equivalence of two logic programs.
  The translation of \anthem{} is based on the translation~$\tau$ used to define the semantics of the input language of \gringo{}~\cite{GebserEtAl2015AbstractGringo}.
  However, $\tau$~transforms logic programs into infinitary propositional formulas, which are not suitable for theorem proving.
  Therefore, \anthem{} instead uses a new translation~$\tau^*$, which transform programs into finite first-order formulas.
  But the output produced by \anthem{} has the semantics of the \emph{logic of here-and-there} whereas automated theorem provers work with the semantics of \emph{classical logic}.
  In the special case that the input programs are \emph{positive programs} the two semantics coincide, which makes the usage of \anthem{} possible.

  To illustrate the restrictions of the current version of \anthem{}, let us look at some example programs.
  The first program (Listing~\ref{lst:colour_atoms}) is part of the input to a graph colouring problem defining the available colours.

  \begin{lstlisting}[language=input, caption={Defining available colours in a graph colouring problem}, label={lst:colour_atoms}]
colour(r).
colour(g).
colour(b).
  \end{lstlisting}

  Such a program can be rewritten using a \emph{pool} as in Listing~\ref{lst:colour_pool} to make it more compact.

  \begin{lstlisting}[language=input, caption={Defining \lstinline{colour} using a pool}, label={lst:colour_pool}]
colour(r;g;b).
  \end{lstlisting}

  With the current version of \anthem{}, it is not possible to verify that this simplification is strongly equivalent, as \anthem{} terminates with an error stating that pools are not yet supported.

  Another example is the program in Listing~\ref{lst:transitive_rule}, which defines a transitive relation~$p$ for the set of values~$X$ for which~$q(X)$ holds.
  This program first uses a choice rule to define the relation~$p$ and then uses a basic rule to make the relation transitive.

  \begin{lstlisting}[language=input, caption={Defining a transitive relation with a basic rule}, label={lst:transitive_rule}]
{ p(X,Y) } :- q(X), q(Y).
p(X,Y) :- p(X,Z), p(Z,Y), q(X), q(Y), q(Z).
  \end{lstlisting}

  An alternative program to achieve this is given in Listing~\ref{lst:transitive_constraint}.
  Again, a choice rule is used to define the relation.
  However, to ensure that this relation is transitive a constraint is used this time.

  \begin{lstlisting}[language=input, caption={Defining a transitive relation with a constraint}, label={lst:transitive_constraint}]
{ p(X,Y) } :- q(X), q(Y).
:- p(X,Z), p(Z,Y), not p(X,Y), q(X), q(Y), q(Z).
  \end{lstlisting}

  Again, it is not possible to verify the strong equivalence of these two programs using the current version of \anthem{}.
  While anthem produces an output in this example, it also prints a message stating that the inputs are \emph{non-positive programs}, from which we can conclude that the semantics of the output is the logic of here-and-there.
  This is because the translation implemented in \anthem{} uses the logic of here-and-there as an intermediate step towards transforming the programs into formulas in classical logic.
  For positive logic programs, the semantics of the logic of here-and-there coincides with classical logic, but for programs containing \emph{choices} or \emph{negation}, this is not the case.

  The goal of this thesis is to overcome these current limitations of \anthem{} to enable the verification of the strong equivalence of programs such as Programs~\ref{lst:colour_atoms} and~\ref{lst:colour_pool} as well as Programs~\ref{lst:transitive_rule} and~\ref{lst:transitive_constraint}.
  The main obstacle to achieving this is to map the semantics of the logic of here-and-there to the semantics of classical logic.
  To do so the transformation~$\sigma^*$ is presented, which is based on a transformation defined in~\cite{PearceEtAl2001EncodingsEquilibriumLogic}.
  The applicability of this transformation in the context of strong equivalence is formalised by a theorem in this thesis.

  Furthermore, this thesis extends the translation~$\tau^*$ implemented by \anthem{} to handle programs containing pools.

  Another contribution of this thesis is a formalisation of how the extended translation~$\tau^*$ together with the transformation~$\sigma^*$, can be used to encode the question of strong equivalence of two logic programs in classical first-order logic.
  Both the extended translation~$\tau^*$ and the transformation~$\sigma^*$ are implemented in a new version of \anthem{}\footnote{https://github.com/janheuer/anthem-1/releases/tag/v0.3}.

  As a secondary goal, this thesis compares the ability of several automated theorem provers to verify strong equivalence problems generated by \anthem{}.

  The structure of the thesis is as follows.
  Chapter~\ref{chap:background} formalises the concept of strong equivalence and details the necessary background regarding the logic of here-and-there, as well as how to express the semantics of the logic of here-and-there in classical logic.
  Chapter~\ref{chap:translation} extends the translation of \anthem{} by pools and shows how the transformation from the logic of here-and-there to classical logic can be used to verify the strong equivalence of programs with negation and simple choices.
  A brief description of how this is implemented in \anthem{} is given in Chapter~\ref{chap:implementation}.
  Chapter~\ref{chap:experiments} first introduces several automated theorem provers, before comparing them for verifying the strong equivalence of various logic programs.
  Finally, Chapter~\ref{chap:conclusion} summarises the contributions of this thesis in the context of related work and details some ideas for future work.

  \chapter{Background}
\label{chap:background}
  This chapter first gives a formal definition of the strong equivalence of two logic programs, as well as some context on other forms of equivalence in Answer Set Programming.

  Afterwards, the logic of here-and-there is introduced, which provides a simple characterisation of strong equivalence.
  This logic is used as an intermediate step to expressing the strong equivalence of two logic programs in classical logic.

\section{Strong Equivalence}
\label{sec:strong_eq}
  There are several notions of equivalence in Answer Set Programming.
  The simplest one is the basic equivalence of two programs.

  \begin{definition}[Equivalence]
    Two programs, $\Pi_1$~and~$\Pi_2$ are equivalent if and only if they have the same answer sets.
  \end{definition}

  As an example, the program
  \begin{equation*}
    \begin{array}{l}
      p(X) \leftarrow q(X), \\
      q(1)
    \end{array}
  \end{equation*}
  is equivalent to
  \begin{equation*}
    \begin{array}{l}
      q(1), \\
      p(1),
    \end{array}
  \end{equation*}
  as both have the answer set $\{p(1), q(1)\}$.
  While the two programs have the same answer set, they are not the same, which can be observed by adding the simple fact~$q(2)$.
  The answer set of the first program is now $\{p(1), p(2), q(1), q(2)\}$, while the answer set of the second program does not contain~$p(2)$.

  This idea of programs being the same while adding facts to them is captured by uniform equivalence~\cite{EiterFink2003UniformEquivalenceLogic}.

  \begin{definition}[Uniform Equivalence]
    Two programs, $\Pi_1$~and~$\Pi_2$ are uniformly equivalent if and only if for every set of facts~$F$, $\Pi_1 \cup F$ and $\Pi_2 \cup F$ have the same answer set.
  \end{definition}

  For example, the program
  \begin{equation}
    \begin{array}{l}
      p \leftarrow q, \\
      p \leftarrow \mynot q
    \end{array}
    \label{ex:uniform_a}
  \end{equation}
  is uniformly equivalent to
  \begin{equation}
    p.
    \label{ex:uniform_b}
  \end{equation}
  Getting different answer sets from the two programs cannot be accomplished by adding facts to them.
  However, this can be achieved by adding the rule $q \leftarrow p$ to both programs.
  Program~\ref{ex:uniform_a} does not have any answer sets with this rule, while Program~\ref{ex:uniform_b} has the answer set $\{p, q\}$.

  This leads to an even stricter form of equivalence called strong equivalence~\cite{LifschitzEtAl2001StronglyEquivalentLogic}.

  \begin{definition}[Strong Equivalence]
    Two programs, $\Pi_1$~and~$\Pi_2$ are strongly equivalent if and only if for every program~$\Pi$, the answer sets of $\Pi_1 \cup \Pi$ and $\Pi_2 \cup \Pi$ are the same.
  \end{definition}

  For example, the program
  \begin{equation*}
    \begin{array}{l}
      p \leftarrow q, \\
      q
    \end{array}
  \end{equation*}
  and the program obtained by replacing the rule with a fact
  \begin{equation*}
    \begin{array}{l}
      p, \\
      q
    \end{array}
  \end{equation*}
  are strongly equivalent.
  Removing the constraint from the following program
  \begin{equation}
    \begin{array}{l}
      p \leftarrow q, \\
      \leftarrow \mynot p \land q
    \end{array}
    \label{ex:strong_eq2_a}
  \end{equation}
  also results in a strongly equivalent program
  \begin{equation}
    p \leftarrow q.
    \label{ex:strong_eq2_b}
  \end{equation}

  Both uniform and strong equivalence are interesting concepts in the context of program optimisation.
  If the optimised version of a program is uniformly equivalent to the original, it will lead to the same results on arbitrary inputs.
  Using the concept of strong equivalence makes it possible to only look at part of a program in the optimisation process.
  This is a desirable property, as it is common to split up a program into different parts.
  By verifying the strong equivalence of the optimised part to the original, it is ensured that adding it back to the rest of the program will not change the results on arbitrary inputs.

\section{Logic of Here-and-There}
  The logic of here-and-there is an intermediate logic between classical and intuitionistic logic.
  It was first connected to Answer Set Programming by~\cite{Pearce1997NewLogicalCharacterisation}, who introduced a new logical characterisation of answer sets as a form of minimal models in the logic of here-and-there.
  Furthermore, the logic of here-and-there also provides an alternative characterisation of strong equivalence, which is introduced in Section~\ref{sec:strong_eq_ht}.

  Using this characterisation is still not enough to use automated theorem provers for verifying strong equivalence, as they do not support the semantics of the logic of here-and-there.
  However, it is possible to map the semantics of the logic of here-and-there to the semantics of classical logic, as explained in Section~\ref{sec:mapping}.
\subsection{Definition}
\label{sec:definition}
  The logic of here-and-there uses a standard propositional language built from a set of \emph{atoms}~$\atoms$ as well as the usual logical symbols $\bot, \neg, \land, \lor,$ and $\rightarrow$.
  The \emph{logical complexity} of a formula~$\phi$, written as~$lc(\phi)$, is the number of occurrences of the logic symbols $\neg, \land, \lor,$ and $\rightarrow$ in~$\phi$.
  An \emph{HT-interpretation} is an ordered pair~$\htint{H}{T}$ of sets of atoms such that $H \subseteq T$.

  Intuitively, an interpretation describes two worlds, the ``here'' and ``there''.
  Atoms true in the ``here'' world are provably true, atoms true in the ``there'' world are possibly true, and atoms which are neither true ``here'' nor ``there'' are verified as false.
  Furthermore, everything that is true in the ``here'' world is also true in the ``there'' world.

  The definition of satisfiability in the logic of here-and-there is in parts analogous to classical logic; however, it significantly differs in the definition of negation and implication.
  \begin{definition}[Satisfiability in the Logic of Here-And-There]
    The satisfiability relation~$\models$ is recursively defined as follows:
    \begin{itemize}
      \item $\htint{H}{T} \not\models \bot$
      \item for any atom $p \in \atoms$, $\htint{H}{T} \models p$ if $p \in H$
      \item $\htint{H}{T} \models \phi_1 \land \phi_2$ if $\htint{H}{T} \models \phi_1$ and $\htint{H}{T} \models \phi_2$
      \item $\htint{H}{T} \models \phi_1 \lor \phi_2$ if $\htint{H}{T} \models \phi_1$ or $\htint{H}{T} \models \phi_2$
      \item $\htint{H}{T} \models \neg\phi$ if $\htint{T}{T} \not\models \phi$
      \item $\htint{H}{T} \models \phi_1 \rightarrow \phi_2$ if $ \!
        \begin{aligned}[t]
            &\htint{H}{T} \models \phi_1 \text{ implies } \htint{H}{T} \models \phi_2 \text{ and} \\
            &\htint{T}{T} \models \phi_1 \text{ implies } \htint{T}{T} \models \phi_2.\\
        \end{aligned}$
    \end{itemize}
    \label{def:satisfiability}
  \end{definition}
  Note that $\htint{T}{T} \models \phi$ is equivalent to $T \models \phi$ (in classical logic).

\subsection{Strong Equivalence in the Logic of Here-And-There}
\label{sec:strong_eq_ht}
  In order to characterise strongly equivalent logic programs in the logic of here-and-there, logic programs first have to be represented in the logic of here-and-there.
  For simple programs such as the examples from Section~\ref{sec:strong_eq} this is relatively straightforward.
  For example, Program~\ref{ex:strong_eq2_a} is represented by propositional formulas as follows:
  \begin{equation*}
    \begin{array}{l}
      q \rightarrow p, \\
      \neg p \land q \rightarrow \bot.
    \end{array}
  \end{equation*}

  The following theorem from~\cite[Theorem~1]{LifschitzEtAl2001StronglyEquivalentLogic} reduces the question of strong equivalence to a satisfiability problem in the logic of here-and-there.
  \begin{theorem}[Strong Equivalence]
    Two programs, $\Pi_1$~and~$\Pi_2$ are strongly equivalent if and only if their representations as propositional formulas are equivalent in the logic of here-and-there.
    \label{thm:strong_equivalence}
  \end{theorem}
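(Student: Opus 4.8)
The plan is to route everything through the \emph{equilibrium model} characterisation of answer sets~\cite{Pearce1997NewLogicalCharacterisation}: a set~$T$ of atoms is an answer set of a program~$\Pi$ exactly when $\htint{T}{T}$ is a classical model of the propositional representation of~$\Pi$ and no proper subset~$H \subsetneq T$ yields an HT-model~$\htint{H}{T}$ of that representation. Writing $F_i$ for the representation of~$\Pi_i$ and noting that the representation of a union $\Pi_i \cup \Pi$ is the conjunction $F_i \land F$, where $F$ represents~$\Pi$, the claim reduces to a statement about HT-models, which is the whole point of passing through the logic of here-and-there. I would also record the elementary fact that, since program representations are conjunctions of implications (and atoms), $\htint{H}{T} \models F_i$ already entails $T \models F_i$ classically, by the implication clause of Definition~\ref{def:satisfiability}.

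For the ``if'' direction, assume $F_1$ and $F_2$ have the same HT-models. Given any context program~$\Pi$ with representation~$F$, the HT-interpretations satisfying $F_1 \land F$ are exactly those satisfying $F_2 \land F$; hence these two formulas have the same classical models (take $H = T$) and, a fortiori, the same minimal ones. By the characterisation above, $\Pi_1 \cup \Pi$ and $\Pi_2 \cup \Pi$ have the same answer sets, so $\Pi_1$ and $\Pi_2$ are strongly equivalent.

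For the ``only if'' direction I would argue contrapositively. Suppose $F_1$ and $F_2$ are not HT-equivalent and fix a distinguishing HT-interpretation, say $\htint{H}{T} \models F_1$ but $\htint{H}{T} \not\models F_2$ (the symmetric case is analogous), with $H \subseteq T$; by the remark above, $T \models F_1$ classically. I then build a context program~$\Pi$ from this interpretation. If $T \not\models F_2$ classically, let $\Pi$ be the facts $\{\, p \mid p \in T \,\}$: then $T$ is an answer set of $\Pi_1 \cup \Pi$ (it is a classical model, and the facts force any HT-model $\htint{H'}{T}$ to satisfy $H' = T$), whereas $T$ is not even a classical model of $F_2$ and hence not an answer set of $\Pi_2 \cup \Pi$. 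Otherwise $T \models F_2$, and then $\htint{H}{T} \not\models F_2$ forces $H \subsetneq T$; let $\Pi$ consist of the facts $\{\, p \mid p \in H \,\}$ together with the ``loop'' rules $q \leftarrow r$ for all $q, r \in T \setminus H$. These force every HT-model $\htint{H'}{T}$ of~$\Pi$'s representation to satisfy $H \subseteq H'$ and $H' \in \{H, T\}$, while $T$ is a classical model of that representation. One checks $\htint{H}{T}$ itself satisfies~$\Pi$; hence $\htint{H}{T} \models F_1 \land F$ with $H \subsetneq T$, so $T$ is \emph{not} an equilibrium model of $\Pi_1 \cup \Pi$. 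On the other hand $\htint{H}{T} \not\models F_2 \land F$, so the only surviving proper subset of~$T$ fails, while $T$ is a classical model of $F_2 \land F$; thus $T$ \emph{is} an answer set of $\Pi_2 \cup \Pi$. In both cases the answer sets of $\Pi_1 \cup \Pi$ and $\Pi_2 \cup \Pi$ differ, so $\Pi_1$ and $\Pi_2$ are not strongly equivalent.

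The routine bookkeeping all sits in the ``only if'' direction, and that is where I expect the main effort: choosing the right context program and then verifying the equilibrium (minimality) conditions, in particular that the loop rules really do collapse the lattice of candidate here-worlds to $\{H, T\}$, plus the case split on whether $T$ classically satisfies $F_2$. The one conceptual ingredient that must be pinned down first is the precise notion of ``representation as propositional formulas'' for rules with negation and constraints, so that Definition~\ref{def:satisfiability} for~$\rightarrow$ matches the reduct-based definition of answer set; once the equivalence ``answer sets $=$ equilibrium models'' is available, the construction above finishes the proof, following~\cite{LifschitzEtAl2001StronglyEquivalentLogic}.
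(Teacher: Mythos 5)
Your proof is correct, but note that the thesis does not prove this statement at all: Theorem~\ref{thm:strong_equivalence} is quoted as Theorem~1 of \cite{LifschitzEtAl2001StronglyEquivalentLogic}, so there is no in-paper proof to compare against. What you have written is essentially the original Lifschitz--Pearce--Valverde argument: answer sets characterised as equilibrium models, the easy direction by observing that HT-equivalent $F_1, F_2$ give $F_1 \land F$ and $F_2 \land F$ the same HT-models and hence the same equilibrium models, and the converse by contraposition with the two context programs --- the facts of $T$ when $T \not\models F_2$ classically, and the facts of $H$ together with the rules $q \leftarrow r$ for $q, r \in T \setminus H$ when $T \models F_2$ --- whose minimality bookkeeping (collapsing the candidate here-worlds to $\{H, T\}$) you carry out correctly. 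Two minor remarks: the persistence property $\htint{H}{T} \models \phi$ implies $T \models \phi$ holds for \emph{every} formula in the logic of here-and-there, not only for conjunctions of implications, so no special shape of the representation is needed there; and the identification of answer sets with equilibrium models must indeed be available for the full program class under consideration (rules with negation, constraints, and in this thesis also choices and pools), which you rightly flag as the one ingredient to pin down before the construction goes through.
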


  The following formula can, for example, express the strong equivalence between Programs~\ref{ex:strong_eq2_a}~and~\ref{ex:strong_eq2_b} in the logic of here-and-there:
  \begin{equation*}
    ((q \rightarrow p) \land (\neg p \land q \rightarrow \bot)) \leftrightarrow (p \rightarrow q).
  \end{equation*}

\subsection{Expressing the Semantics of the Logic of Here-And-There in Classical Logic}
\label{sec:mapping}
  The main idea of expressing satisfiability in the logic of here-and-there in classical logic is to introduce a new set of atoms which represent the values of atoms in the ``there'' world.
  Formally, given a set of atoms~$\atoms$, a new set of atoms~$\atoms^\prime$ is introduced as $\atoms^\prime = \{p^\prime \mid p \in \atoms\}$, which is disjoint to $\atoms$.
  The formula~$\phi^\prime$ is the result of replacing every atom~$p$ in~$\phi$ with~$p^\prime$.
  Given any set $T \subseteq \atoms$, $T^\prime$~is defined as $T^\prime = \{p^\prime \mid p \in T\}$.
  The ``primed'' formulas~$\phi^\prime$ will then be used to express the value of a formula~$\phi$ in the ``there'' world (i.e. the relation $\htint{T}{T} \models \phi$).

  Satisfiability in the logic of here-and-there can then be expressed in classical logic by the following transformation~\cite[Definition~1]{PearceEtAl2001EncodingsEquilibriumLogic}:
  \begin{definition}[$\sigma$]
    Let~$\phi$ be a formula. Then, $\sigma(\phi)$~is recursively defined as follows:
    \begin{itemize}
      \item if $\phi \in \atoms \cup \{\top, \bot\}$, then $\sigma(\phi) = \phi$
      \item if $\phi = \left(\phi_1 \circ \phi_2\right)$, for $\circ \in \{\land, \lor\}$, then $\sigma(\phi) = \sigma(\phi_1) \circ \sigma(\phi_2)$
      \item if $\phi = \neg\psi$, then $\sigma(\phi) = \neg\sigma(\psi) \land \neg\psi^\prime$
      \item if $\phi = \left(\phi_1 \rightarrow \phi_2\right)$, then $\sigma(\phi) = \left(\sigma(\phi_1) \rightarrow \sigma(\phi_2)\right) \land \left(\phi_1^\prime \rightarrow \phi_2^\prime\right)$.
    \end{itemize}
    \label{def:mapping}
  \end{definition}
  Intuitively the unprimed formulas represent the formulas evaluated in the ``here'' world, while the primed formulas represent the ones evaluated in the ``there'' world.
  As in the definition of satisfiability (Definition~\ref{def:satisfiability}) only negated formulas and implications directly depend on the ``there'' world.

  Additionally, the condition $H \subseteq T$, i.e. every atom true ``here'' is also true ``there'', has to be encoded.
  This is done by adding the following set of formulas: $\paxioms = \{p \rightarrow p^\prime \mid p \in \atoms\}$.

  As a consequence of adding $\paxioms$, it also holds that: $\psi \rightarrow \psi^\prime$ for every formula~$\psi$~\cite[Proposition~1]{PearceEtAl2001EncodingsEquilibriumLogic}.
  Using this fact (or rather its contraposition $\neg\psi^\prime \rightarrow \neg\psi$) the transformation from Definition~\ref{def:mapping} can be simplified to the following:

  \begin{definition}[$\sigma^*$]
    Let $\phi$ be a formula. Then, $\sigma^*(\phi)$ is recursively defined as follows:
    \begin{itemize}
      \item if $\phi \in \atoms \cup \{\top, \bot\}$, then $\sigma^*(\phi) = \phi$
      \item if $\phi = \left(\phi_1 \circ \phi_2\right)$, for $\circ \in \{\land, \lor\}$, then $\sigma^*(\phi) = \sigma^*(\phi_1) \circ \sigma^*(\phi_2)$
      \item if $\phi = \neg\psi$, then $\sigma^*(\phi) = \neg\psi^\prime$
      \item if $\phi = \left(\phi_1 \rightarrow \phi_2\right)$, then $\sigma^*(\phi) = \left(\sigma^*(\phi_1) \rightarrow \sigma^*(\phi_2)\right) \land \left(\phi_1^\prime \rightarrow \phi_2^\prime\right)$
    \end{itemize}
    \label{def:simplified_mapping}
  \end{definition}

  This definition is similar to~\cite[Definition~2]{PearceEtAl2001EncodingsEquilibriumLogic}.
  However, \cite[Definition~2]{PearceEtAl2001EncodingsEquilibriumLogic}~only defines the transformation on expressions (i.e. formulas only consisting of atoms and the connectives $\land, \lor,$ and $\neg$).

  In order to state how the equivalence of two formulas in the logic of here-and-there can be expressed in classical logic, let us first consider two lemmas.
  The first lemma relates interpretations in the logic of here-and-there to interpretations in classical logic.

  \begin{lemma}
    There exists a 1-to-1 correspondence between interpretations~$\htint{H}{T}$ in the logic of here-and-there and classical interpretations~$I$ over the alphabet $\atoms \cup \atoms^\prime$ such that $I \models \paxioms$.
    \label{lem:1_to_1}
  \end{lemma}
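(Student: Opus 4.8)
The plan is to exhibit the correspondence explicitly and verify it is a bijection. Given an HT-interpretation $\htint{H}{T}$ with $H \subseteq T \subseteq \atoms$, I define the classical interpretation $I$ over $\atoms \cup \atoms^\prime$ by setting, for each $p \in \atoms$, $I \models p$ iff $p \in H$, and $I \models p^\prime$ iff $p \in T$. In other words, $I$ is the classical interpretation whose restriction to $\atoms$ is $H$ and whose primed atoms encode $T$; formally $I = H \cup T^\prime$. Conversely, given a classical interpretation $I$ over $\atoms \cup \atoms^\prime$ with $I \models \paxioms$, I define $H = \{p \in \atoms \mid I \models p\}$ and $T = \{p \in \atoms \mid I \models p^\prime\}$, and set $\htint{H}{T}$ to be the resulting pair.

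First I would check that these two maps are well-defined. For the forward direction, $I = H \cup T^\prime$ is a genuine interpretation over $\atoms \cup \atoms^\prime$, and since $H \subseteq T$ we have that whenever $I \models p$ (i.e. $p \in H$) also $p \in T$, hence $I \models p^\prime$; thus $I \models p \rightarrow p^\prime$ for every $p \in \atoms$, so $I \models \paxioms$. For the backward direction, the only thing to verify is that the pair $\htint{H}{T}$ produced is a legal HT-interpretation, i.e. that $H \subseteq T$: if $p \in H$ then $I \models p$, and because $I \models \paxioms$ we get $I \models p^\prime$, so $p \in T$.

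Next I would argue that the two maps are mutually inverse, which gives the 1-to-1 correspondence. Starting from $\htint{H}{T}$, forming $I = H \cup T^\prime$, and then reading off $H' = \{p \mid I \models p\}$ and $T' = \{p \mid I \models p^\prime\}$, we recover $H' = H$ and $T' = T$ directly from the definition of $I$ (using that $\atoms$ and $\atoms^\prime$ are disjoint, so the two halves of $I$ do not interfere). Conversely, starting from $I$ with $I \models \paxioms$, extracting $H$ and $T$ as above, and forming $H \cup T^\prime$, we get back exactly $I$ because $I$ is completely determined by which unprimed atoms and which primed atoms it satisfies, and those are precisely $H$ and $T^\prime$.

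This proof is essentially bookkeeping, so there is no single hard step; the only thing requiring a moment's care is making sure the constraint $H \subseteq T$ on the HT side corresponds exactly to the axioms $\paxioms$ on the classical side, in both directions, and that the disjointness of $\atoms$ and $\atoms^\prime$ is what makes the decomposition $I = H \cup T^\prime$ unambiguous. I expect the write-up to be short, with the bulk of it being the explicit definition of the correspondence and the two one-line verifications that it respects $\paxioms$ and $H \subseteq T$.
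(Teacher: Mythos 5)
Your proposal is correct and follows essentially the same route as the paper's proof: the same explicit maps $\htint{H}{T} \mapsto H \cup T^\prime$ and $I \mapsto (I \cap \atoms, \{p \mid p^\prime \in I \cap \atoms^\prime\})$, the same use of $H \subseteq T$ and $I \models \paxioms$ for well-definedness, and the same disjointness argument to show the maps are mutually inverse. The paper merely writes out the inverse-checking as explicit set computations, which your sketch compresses but does not omit.
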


  \begin{proof}
    First, I will define how a classical interpretation can be obtained from an HT-interpretation and vice versa, before showing that this is a 1-to-1 correspondence.
    \smallskip

    Given an HT-interpretation~$\htint{H}{T}$, the classical interpretation~$I$ can be defined as $I = H \cup T^\prime$.
    As $H \subseteq T$ per definition, it holds that $I \models \paxioms$.

    Given a classical interpretation~$I$, $H$~can be defined as $H = I \cap \atoms$ (i.e. all ``unprimed'' atoms in~$I$) and~$T$ as $T = \{p \mid p^\prime \in I \cap \atoms^\prime\}$ (i.e. the ``unprimed'' versions of all ``primed'' atoms in~$I$).
    As $I \models \paxioms$, the condition $H \subseteq T$ follows.
    Therefore~$\htint{H}{T}$ is an interpretation in the logic of here-and-there.
    \smallskip

    To prove that this is a 1-to-1 correspondence, it first needs to be shown that given an HT-interpretation~$\htint{H}{T}$, the following two equations hold:
    \begin{subequations}
      \begin{align}
        H &= I \cap \atoms, \label{eq:ht_to_i1}\\
        T &= \{p \mid p^\prime \in I \cap \atoms^\prime\} \label{eq:ht_to_i2}
      \end{align}
    \end{subequations}
    with $I$ defined as above.
    Applying the definition of~$I$ to~\eqref{eq:ht_to_i1} gives us
    \begin{align*}
      H ={} &I \cap \atoms \\
      ={} &(H \cup T^\prime) \cap \atoms \tag{definition of $I$} \\
      ={} &(H \cap \atoms) \cup (T^\prime \cap \atoms) \\
      ={} &H \cup \varnothing \tag{$T \subseteq \atoms^\prime$ and $\atoms \cap \atoms^\prime = \varnothing$} \\
      ={} &H.
    \end{align*}
    Similarly for~\eqref{eq:ht_to_i2}
    \begin{align*}
      T ={} &\{p \mid p^\prime \in I \cap \atoms^\prime\} \\
      ={} &\{p \mid p^\prime \in (H \cup T^\prime) \cap \atoms^\prime\} \tag{definition of $I$} \\
      ={} &\{p \mid p^\prime \in (H \cap \atoms^\prime) \cup (T^\prime \cap \atoms^\prime)\} \\
      ={} &\{p \mid p^\prime \in \varnothing \cup T^\prime\} \tag{$H \subseteq \atoms$ and $\atoms \cap \atoms^\prime = \varnothing$} \\
      ={} &\{p \mid p^\prime \in T^\prime\} \\
      ={} &T.
    \end{align*}
    Second, it needs to be shown that given a classical interpretation~$I$, the following equation holds:
    \begin{equation}
      I = H \cup T^\prime \label{eq:i_to_ht}
    \end{equation}
    with~$H$ and~$T$ defined as above.
    Applying the definitions to~\eqref{eq:i_to_ht} gives us:
    \begin{align*}
      I ={} &H \cup T^\prime \\
      ={} &(I \cap \atoms) \cup (\{p \mid p^\prime \in I \cap \atoms^\prime\})^\prime \tag{definition of $H$ and $T$} \\
      ={} &(I \cap \atoms) \cup (\{p^\prime \mid p^\prime \in I \cap \atoms^\prime\}) \\
      ={} &(I \cap \atoms) \cup (I \cap \atoms^\prime) \\
      ={} &I \cap (\atoms \cup \atoms^\prime) \tag{$I \subseteq \atoms \cup \atoms^\prime$} \\
      ={} &I.
    \end{align*}

    Thus, the above defined correspondence is a 1-to-1 correspondence.
  \end{proof}

  The second lemma relates satisfiability in the logic of here-and-there to satisfiability in classical logic.

  \begin{lemma}
    Let~$\phi$ be a formula.
    An HT-interpretation~$\htint{H}{T}$ satisfies~$\phi$, if and only if the classical interpretation $I = H \cup T^\prime$ (over the alphabet $\atoms \cup \atoms^\prime$) satisfies~$\sigma^*(\phi)$.
    \label{lem:satisfiability}
  \end{lemma}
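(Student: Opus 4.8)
The plan is to proceed by structural induction on $\phi$, following the recursive definition of $\sigma^*$ (Definition~\ref{def:simplified_mapping}) in lockstep with the recursive definition of satisfiability in the logic of here-and-there (Definition~\ref{def:satisfiability}). Throughout, write $I = H \cup T^\prime$ and note, as in the proof of Lemma~\ref{lem:1_to_1}, that $I \cap \atoms = H$ and $I \cap \atoms^\prime = T^\prime$.

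Before the induction I would isolate one auxiliary observation: for every formula $\psi$, the classical interpretation $I$ satisfies $\psi^\prime$ if and only if $\htint{T}{T} \models \psi$ (equivalently, $T \models \psi$ in classical logic). This holds because $\psi^\prime$ mentions only atoms from $\atoms^\prime$, so its truth value under $I$ depends only on $I \cap \atoms^\prime = T^\prime$, and $T^\prime \models \psi^\prime$ iff $T \models \psi$ since priming is merely a bijective renaming of the alphabet; the remark following Definition~\ref{def:satisfiability} that $\htint{T}{T} \models \psi$ coincides with classical $T \models \psi$ lets us pass freely between the two readings. This is the fact that will discharge the ``there''-world halves of the negation and implication cases.

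For the induction itself: in the base case $\phi \in \atoms \cup \{\top,\bot\}$ we have $\sigma^*(\phi) = \phi$, and $\htint{H}{T} \models p$ iff $p \in H$ iff $p \in I$ iff $I \models p$, while the cases $\top$ and $\bot$ are immediate. For $\phi = \phi_1 \circ \phi_2$ with $\circ \in \{\land,\lor\}$, since $\sigma^*$ distributes over $\circ$, the claim follows at once from the two induction hypotheses and the matching clauses of $\models$ in both logics. For $\phi = \neg\psi$, we have $\htint{H}{T} \models \neg\psi$ iff $\htint{T}{T} \not\models \psi$ iff, by the auxiliary observation, $I \not\models \psi^\prime$, i.e. $I \models \neg\psi^\prime = \sigma^*(\neg\psi)$. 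For $\phi = \phi_1 \rightarrow \phi_2$, the here-and-there clause for implication splits into a ``here'' implication and a ``there'' implication; by the induction hypotheses the first becomes $I \models \sigma^*(\phi_1) \rightarrow \sigma^*(\phi_2)$, and by the auxiliary observation the second becomes $I \models \phi_1^\prime \rightarrow \phi_2^\prime$, whose conjunction is exactly $I \models \sigma^*(\phi_1 \rightarrow \phi_2)$.

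The cases for atoms, $\land$, and $\lor$ are routine; the content sits in negation and implication, and for both the only real ingredient is the auxiliary observation $I \models \psi^\prime \iff T \models \psi$. I therefore expect the main (though still modest) obstacle to be stating and justifying that auxiliary fact cleanly: $\sigma^*$ introduces primed subformulas $\psi^\prime$ that are never processed further by $\sigma^*$, so their semantics must be handled by this separate renaming argument rather than by the main induction hypothesis, and one must be careful that the bookkeeping ($I \cap \atoms^\prime = T^\prime$, priming as an alphabet bijection) is made explicit.
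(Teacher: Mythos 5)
Your proposal is correct and follows essentially the same route as the paper: induction on the structure (the paper phrases it as induction on the logical complexity~$lc(\phi)$), with the negation and implication cases discharged by exactly the renaming argument you isolate as an auxiliary observation, namely $T \models \psi$ iff $T^\prime \models \psi^\prime$ iff $I \models \psi^\prime$ since $I = H \cup T^\prime$ and $H \cap \atoms^\prime = \varnothing$. The only difference is presentational: the paper carries out that chain inline within the negation and implication cases rather than stating it as a separate lemma.
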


  \begin{proof}
    By induction on~$lc(\phi)$ (for the definition of~$lc$ see Section~\ref{sec:definition}).
    \smallskip

    \textsc{Induction base}.
    Assume $lc(\phi) = 0$.
    Then $\sigma^*(\phi) = \phi$.
    If $\phi = \bot$ or $\phi = \top$ the statement trivially holds.
    If $\phi = p$ for some $p \in \atoms$, then $\htint{H}{T} \models p$ iff $I \models p$ as $H \subseteq I$.
    \smallskip

    \textsc{Induction step}.
    Assume $lc(\phi) > 0$, and let the statement hold for all formulas~$\psi$ such that $lc(\psi) < lc(\phi)$.
    To show that the statement also holds for~$\phi$ several cases have to be considered:

    \emph{Case 1.} $\phi = \phi_1 \circ \phi_2$, for $\circ \in \{\land, \lor\}$

    Then $\sigma^*(\phi) = \sigma^*(\phi_1) \circ \sigma^*(\phi_2)$.
    As $lc(\phi_i) < lc(\phi)$, by induction hypothesis it holds that $\htint{H}{T} \models \phi_i$ iff $I \models \sigma^*(\phi_i)$, for $i=1,2$.
    Therefore, $\htint{H}{T} \models \phi_1 \circ \phi_2$ iff $I \models \sigma^*(\phi_1) \circ \sigma^*(\phi_2)$.

    \emph{Case 2.} $\phi = \neg\psi$

    Then $\sigma^*(\phi) = \neg\psi^\prime$.
    Per Definition~\ref{def:satisfiability},
    \begin{align*}
      \htint{H}{T} \models \neg\psi \leftrightarrow{} &\htint{T}{T} \not\models \psi \\
      \leftrightarrow{} &T \not\models \psi \\
      \leftrightarrow{} &T^\prime \not\models \psi^\prime \\
      \leftrightarrow{} &I \not\models \psi^\prime \tag{$I = H \cup T^\prime$ and $H \cap \atoms^\prime = \varnothing$} \\
      \leftrightarrow{} &I \models \neg\psi^\prime \tag{definition satisfiability}
    \end{align*}
    so it holds that $\htint{H}{T} \models \neg\psi$ iff $I \models \neg\psi^\prime$.

    \emph{Case 3.} $\phi = \phi_1 \rightarrow \phi_2$

    Then $\sigma^*(\phi) = (\sigma^*(\phi_1) \rightarrow \sigma^*(\phi_2)) \land (\phi_1^\prime \rightarrow \phi_2^\prime)$.
    Per Definition~\ref{def:satisfiability},
    \begin{align*}
      \htint{H}{T} \models \phi_1 \rightarrow \phi_2 \leftrightarrow{} &\htint{H}{T} \models \phi_1 \text{ implies } \htint{H}{T} \models \phi_2 \text{ and } \\
      &\htint{T}{T} \models \phi_1 \text{ implies } \htint{T}{T} \models \phi_2.
    \end{align*}
    As $lc(\phi_i) < lc(\phi)$, for $i=1,2$,
    \begin{align*}
      &\htint{H}{T} \models \phi_1 \text{ implies } \htint{H}{T} \models \phi_2 \\
      \leftrightarrow{} &I \models \sigma^*(\phi_1) \text{ implies } I \models \sigma^*(\phi_2)  \tag{induction hypothesis} \\
      \leftrightarrow{} &I \models \sigma^*(\phi_1) \rightarrow \sigma^*(\phi_2) \tag{definition satisfiability}.
    \end{align*}
    Furthermore,
    \begin{align*}
      &\htint{T}{T} \models \phi_i \\
      \leftrightarrow{} &T \models \phi_i \\
      \leftrightarrow{} &T^\prime \models \phi_i^\prime \\
      \leftrightarrow{} &I \models \phi_i^\prime, \tag{$I = H \cup T^\prime$ and $H \cap \atoms^\prime = \varnothing$}
    \end{align*}
    for $i=1,2$.
    Therefore,
    \begin{align*}
      &\htint{T}{T} \models \phi_1 \text{ implies } \htint{T}{T} \models \phi_2 \\
      \leftrightarrow{} &I \models \phi_1^\prime \text{ implies } I \models \phi_2^\prime \tag{see above} \\
      \leftrightarrow{} &I \models \phi_1^\prime \rightarrow \phi_2^\prime \tag{definition satisfiability}.
    \end{align*}
    Hence, $\htint{H}{T} \models \phi_1 \rightarrow \phi_2$ iff $I \models (\sigma^*(\phi_1) \rightarrow \sigma^*(\phi_2)) \land (\phi_1^\prime \rightarrow \phi_2^\prime)$.

    Thus, it is shown that the statement holds for all cases of~$\phi$.
  \end{proof}

  Finally, the theorem on expressing the equivalence of two formulas in the logic of here-and-there in classical logic can be stated:
  \begin{theorem}[HT-Equivalence in Classical Logic]
    Let~$\phi_1$~and~$\phi_2$ be formulas.
    The formula $\phi_1 \leftrightarrow \phi_2$ is valid in the logic of here-and-there, if and only if the formula $\sigma^*(\phi_1) \leftrightarrow \sigma^*(\phi_2)$ is satisfied by every classical interpretation~$I$ over the alphabet $\atoms \cup \atoms^\prime$ such that $I \models \paxioms$.
    \label{thm:ht_equivalence}
  \end{theorem}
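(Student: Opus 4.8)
The plan is to assemble the theorem from Lemmas~\ref{lem:1_to_1} and~\ref{lem:satisfiability}, the only delicate point being the correct reading of ``$\phi_1 \leftrightarrow \phi_2$ is valid in the logic of here-and-there''. Recall that $\phi_1 \leftrightarrow \phi_2$ abbreviates $(\phi_1 \rightarrow \phi_2) \land (\phi_2 \rightarrow \phi_1)$, so by Definition~\ref{def:satisfiability} an HT-interpretation $\htint{H}{T}$ satisfies it exactly when $\htint{H}{T} \models \phi_1 \Leftrightarrow \htint{H}{T} \models \phi_2$ \emph{and} $\htint{T}{T} \models \phi_1 \Leftrightarrow \htint{T}{T} \models \phi_2$. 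First I would observe that, once we quantify over \emph{all} HT-interpretations, the second clause is redundant: $\htint{T}{T}$ is itself an HT-interpretation, so demanding the first clause for every $\htint{H}{T}$ already forces the second for every $\htint{T}{T}$. Consequently, $\phi_1 \leftrightarrow \phi_2$ is valid in the logic of here-and-there if and only if for every HT-interpretation $\htint{H}{T}$ we have $\htint{H}{T} \models \phi_1 \Leftrightarrow \htint{H}{T} \models \phi_2$.

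Next I would fix an arbitrary HT-interpretation $\htint{H}{T}$ together with its associated classical interpretation $I = H \cup T^\prime$ over $\atoms \cup \atoms^\prime$. Applying Lemma~\ref{lem:satisfiability} once to $\phi_1$ and once to $\phi_2$ yields $\htint{H}{T} \models \phi_i$ iff $I \models \sigma^*(\phi_i)$ for $i = 1, 2$. Chaining these two equivalences shows that $\htint{H}{T} \models \phi_1 \Leftrightarrow \htint{H}{T} \models \phi_2$ holds precisely when $I \models \sigma^*(\phi_1) \Leftrightarrow I \models \sigma^*(\phi_2)$, that is, when $I \models \sigma^*(\phi_1) \leftrightarrow \sigma^*(\phi_2)$ in classical logic.

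Finally I would use Lemma~\ref{lem:1_to_1} to transport the quantifier. Since $\htint{H}{T} \mapsto H \cup T^\prime$ is a bijection between HT-interpretations and classical interpretations over $\atoms \cup \atoms^\prime$ satisfying $\paxioms$, the statement ``for every HT-interpretation $\htint{H}{T}$, the associated $I$ satisfies $\sigma^*(\phi_1) \leftrightarrow \sigma^*(\phi_2)$'' is equivalent to ``for every classical interpretation $I$ over $\atoms \cup \atoms^\prime$ with $I \models \paxioms$, $I \models \sigma^*(\phi_1) \leftrightarrow \sigma^*(\phi_2)$''. For the ``only if'' direction one uses that every such $I$ arises as $H \cup T^\prime$ for some HT-interpretation, and for the ``if'' direction that every HT-interpretation produces such an $I$; both facts are exactly what Lemma~\ref{lem:1_to_1} provides. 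Combining the three steps gives the claimed equivalence.

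I do not expect a serious obstacle: the real work is already contained in Lemmas~\ref{lem:1_to_1} and~\ref{lem:satisfiability}, and this theorem is essentially their composition. The one spot demanding care is the first step — verifying that the ``there''-world clause in the semantics of $\leftrightarrow$ is genuinely absorbed by the universal quantification rather than quietly discarded — since otherwise one might wrongly fear that HT-validity of $\phi_1 \leftrightarrow \phi_2$ is strictly stronger than ``$\phi_1$ and $\phi_2$ are satisfied by exactly the same HT-interpretations''.
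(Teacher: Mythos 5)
Your proposal is correct and follows essentially the same route as the paper's own proof: reduce HT-validity of $\phi_1 \leftrightarrow \phi_2$ to ``every HT-interpretation satisfies $\phi_1$ iff it satisfies $\phi_2$'', translate each side pointwise via Lemma~\ref{lem:satisfiability}, and transport the universal quantifier through the bijection of Lemma~\ref{lem:1_to_1}. Your opening observation --- that the ``there''-world clause in the HT semantics of the biconditional is absorbed by quantifying over all HT-interpretations because $\htint{T}{T}$ is itself an HT-interpretation --- is a point the paper's proof passes over silently, so it is a welcome extra justification rather than a deviation.
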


  \begin{proof}
    The formula $\phi_1 \leftrightarrow \phi_2$ is satisfiable in the logic of here-and-there iff for every HT-interpretation $\htint{H}{T}$, $\htint{H}{T} \models \phi_1$ iff $\htint{H}{T} \models \phi_2$.
    Similarly, the formula $\sigma^*(\phi_1) \leftrightarrow \sigma^*(\phi_2)$ is satisfied by every classical interpretation~$I$ (over the alphabet $\atoms \cup \atoms^\prime$ with $I \models \paxioms$) iff for every classical interpretation~$I$ (over the alphabet $\atoms \cup \atoms^\prime$ such that $I \models \paxioms$), $I \models \sigma^*(\phi_1)$ iff $I \models \sigma^*(\phi_2)$.

    By Lemma~\ref{lem:1_to_1} there is a 1-to-1 correspondence between interpretations~$\htint{H}{T}$ in the logic of here-and-there and classical interpretations~$I$ with $I \models \paxioms$.
    By Lemma~\ref{lem:satisfiability} $\htint{H}{T}$ satisfies~$\phi_i$ iff the corresponding classical interpretation~$I$ satisfies~$\sigma^*(\phi_i)$, for $i=1,2$.
    Therefore, $\htint{H}{T} \models \phi_1$ iff $\htint{H}{T} \models \phi_2$ is equivalent to $I \models \sigma^*(\phi_1)$ iff $I \models \sigma^*(\phi_2)$.
  \end{proof}

  For example, using Theorem~\ref{thm:strong_equivalence} and Theorem~\ref{thm:ht_equivalence} the strong equivalence of Programs~\ref{ex:strong_eq2_a}~and~\ref{ex:strong_eq2_b} can be verified by proving that the formula
  \begin{equation*}
    \begin{gathered}
      ((q \rightarrow p) \land (q^\prime \rightarrow p^\prime) \land (\neg p^\prime \land q \rightarrow \bot) \land (\neg p^\prime \land q^\prime \rightarrow \bot)) \leftrightarrow \\
      ((q \rightarrow p) \land (q^\prime \rightarrow p^\prime)),
    \end{gathered}
  \end{equation*}
  holds in every classical interpretation~$I$ (over the alphabet $\atoms \cup \atoms^\prime$) such that $I \models \{ p \rightarrow p^\prime, q \rightarrow q^\prime \}$.

  \chapter[Translating Logic Programs into Formulas]{Translating Logic Programs into Classical First-Order Formulas}
\label{chap:translation}
  In this chapter, the translation implemented in \anthem{} is defined.

  First, the new input language is defined, which extends the previous input language of \anthem{} by pooling.

  Second, it is shown how these logic programs are translated into first-order formulas.
  This translation again is an extension of the previous translation of \anthem{} in order to handle pooling.

  Finally, the transformation of logic programs to classical first-order formulas is completed by expressing the semantics of the logic of here-and-there in classical logic as defined in Section~\ref{sec:mapping}.
  Furthermore, a theorem formalising the correctness of this transformation is given.

\section{Input Language}
\label{sec:input}
  The definition of the input language in this section extends the input language of the previous version of \anthem{} by pooling.
  Before defining this language formally, let us first look at an informal explanation of pooling.

  Pooling is a feature similar to intervals.
  Both are shorthand notations allowing one to express a set of values in a single term.
  For example, to express the set $\{1,...,3\}$ it is possible to use both a pool $1;2;3$ or an interval $1 \twodots 3$.

  However, the difference is that intervals only make it possible to express a set of consecutive integers.
  With pools, it is possible to express non-consecutive integers as well as non-numerical values.
  As an example, to express the even numbers between 1 and 10 the following pooling expression can be used: $2;4;6;8;10$ (to do so using intervals it is necessary to combine an interval with an arithmetic expression: $2 \times (1 \twodots 5)$).
  Another example of a pool is the following:
  \begin{equation}
    a;1 \twodots 5.
    \label{eq:mixed_pool}
  \end{equation}
  Here, a non-numerical value $a$ is part of the pool as well as a set of numerical values expressed by an interval.

  Using pools in atoms in either the head or the body is again similar to using an interval.
  The rule $p \leftarrow q(1 \twodots 3)$ has the same meaning as the rule
  \begin{equation}
    p \leftarrow q(1;2;3),
    \label{eq:pool_body}
  \end{equation}
  $q$~will be true if at least one of~$q(1), q(2),$ or~$q(3)$ is true.
  Using a pool (or an interval) in the head has the meaning that the atom will be true for all values if the body is true.
  For example the rule
  \begin{equation}
    p(X;Y) \leftarrow q(X,Y)
    \label{eq:pool_head}
  \end{equation}
  says that both~$p(X)$~and~$p(Y)$ hold if~$q(X,Y)$ holds.
  Without pools, it would be necessary to express this using two rules (as the current language does not include general choice rules).
  An interval cannot be used here, as the values~$X$~and~$Y$ are neither necessarily consecutive nor numerical.

  The definition of the input language in this thesis corresponds to the one described in~\cite[Section~2]{LifschitzEtAl2019VerifyingStrongEquivalence} with the exception that pools and tuples are allowed as program terms, and the definition of atoms is extended by pools.
  More formally, the following two alternatives are added to the recursive definition of \emph{program terms}:

  \begin{itemize}
    \item if $t_1, \dots, t_k$ are program terms then $(t_1, \dots, t_k)$ is a program term
    \item if $t_1, \dots, t_k$ are program terms then $(t_1; \dots; t_k)$ is a program term
  \end{itemize}
  The definition of an \emph{atom} is extended by the following:

  \begin{itemize}
    \item $p(\mathbf{t_1}; \dots; \mathbf{t_k})$ is an atom, where~$p$ is a symbolic constant and each~$\mathbf{t_i}$ is a tuple of program terms
  \end{itemize}

  Notably, compared to the language defined in~\cite{GebserEtAl2015AbstractGringo} (which the definition in~\cite[Section~2]{LifschitzEtAl2019VerifyingStrongEquivalence} is based on), here both tuples and pools are part of the program terms.
  The semantics of this language is the usual stable model semantics (for details, see~\cite[Section 3]{LifschitzEtAl2019VerifyingStrongEquivalence} or~\cite{GebserEtAl2015AbstractGringo}).

\section[Transforming Logic Programs into Formulas]{Transforming Logic Programs into First-Order Formulas}
\label{sec:translation}
  The definition of the translation~$\tau^*$ in this section corresponds to the definition from~\cite[Section~6]{LifschitzEtAl2019VerifyingStrongEquivalence} and is extended to cover the new input language.

  The target language of this translation is a standard first-order language with quantifiers.
  Notably, this language has variables of two sorts: \emph{program variables} and \emph{integer variables}.
  For details see~\cite[Section~5]{LifschitzEtAl2019VerifyingStrongEquivalence}.

  $\tau^*$~is defined using two translations~$\tau^B$~and~$\tau^H$ for the body and head of a rule respectively.
  Before defining these translations, the formula~$\val{t}{Z}$, expressing that~$Z$ is one of the values of the program term~$t$, needs to be defined.
  This is necessary, as in the input language a term can express a set of values (e.g. by using an interval or a pool), whereas the target language does not include sets.
  Furthermore, the arithmetic operations~$/$~and~$\backslash$ have to be encoded.

  The following definition of $\val{t}{Z}$ is extended by tuples and pools compared to the definition in~\cite[Section~6]{LifschitzEtAl2019VerifyingStrongEquivalence}.

  \begin{definition}[$\val{t}{Z}$]
    For every program term~$t$ the formula~$\val{t}{Z}$, where~$Z$ is a program variable that does not occur in~$t$, is recursively defined as follows:
    \begin{itemize}
      \item if $t$ is a numeral, a symbolic constant, a program variable, $\myinf$, or $\mysup$, then $\val{t}{Z}$ is $Z = t$
      \item if $t$ is $(t_1 \circ t_2)$ where $\circ \in \{+, -, \times\}$, then $\val{t}{Z}$ is
        \[
          \exists I J (Z = I \circ J \land \val{t_1}{I} \land \val{t_2}{J}),
        \]
        where $I, J$ are fresh integer variables
      \item if $t$ is $(t_1 / t_2)$, then $\val{t}{Z}$ is
        \begin{align*}
          \exists I J Q R (I = J \times Q + R &\land \val{t_1}{I} \land \val{t_2}{J} \\
          &\land J \neq 0 \land R \geq 0 \land R < Q \land Z = Q),
        \end{align*}
        where $I, J, Q, R$ are fresh integer variables
      \item if $t$ is $(t_1 \backslash t_2)$, then $\val{t}{Z}$ is
        \begin{align*}
          \exists I J Q R (I = J \times Q + R &\land \val{t_1}{I} \land \val{t_2}{J} \\
          &\land J \neq 0 \land R \geq 0 \land R < Q \land Z = R),
        \end{align*}
        where $I, J, Q, R$ are fresh integer variables
      \item if $t$ is $(t_1 \twodots t_2)$, then $\val{t}{Z}$ is
        \[
          \exists I J K (\val{t_1}{I} \land \val{t_2}{J} \land I \leq K \land K \leq J \land Z = K),
        \]
        where $I, J, K$ are fresh integer variables
      \item if $t$ is $(t_1, \dots, t_k)$, then $\val{t}{Z}$ is
        \[
          \exists I_1 \dots I_k (Z = (I_1, \dots, I_k) \land \val{t_1}{I_1} \land \dots \land \val{t_k}{I_k}),
        \]
        where $I_1, \dots, I_k$ are fresh program variables
      \item if $t$ is $(t_1; \dots; t_k)$, then $\val{t}{Z}$ is
        \[
          \exists I_1 \dots I_k (\val{t_1}{I_1} \land \dots \land \val{t_k}{I_k} \land (Z = I_1 \lor \dots \lor Z = I_k)),
        \]
        where $I_1, \dots, I_k$ are fresh program variables
    \end{itemize}
  \end{definition}

  For example, if~$t$ is the pool from Listing~\ref{lst:colour_pool}, i.e.~$r;g;b$, then~$\val{t}{Z}$ is
  \[
    \exists I_1 I_2 I_3 (\val{r}{I_1} \land \val{g}{I_2} \land \val{b}{I_3} \land (Z = I_1 \lor Z = I_2 \lor Z = I_3)).
  \]
  Applying the definition of~$\val{t}{Z}$ recursively gives the following formula:
  \[
    \exists I_1 I_2 I_3 (I_1 = r \land I_2 = g \land I_3 = b \land (Z = I_1 \lor Z = I_2 \lor Z = I_3)).
  \]
  Simplifying this formula to
  \[
    Z = r \lor Z = g \lor Z = b
  \]
  makes it evident that this states exactly that~$Z$ is a value of the pool~$r;g;b$.
  The following example illustrates the necessity to make the definition of~$\val{t}{Z}$ recursive.
  If~$t$ is the Expression~\ref{eq:mixed_pool}, i.e.~$a;1 \twodots 5$, then~$\val{t}{Z}$ is
  \[
    \exists I_1 I_2 (\val{a}{I_1} \land \val{1 \twodots 5}{I_2} \land (Z = I_1 \lor Z = I_2)).
  \]
  Here one of the sub-terms of the pool is itself an expression which represents multiple values: the interval~$1 \twodots 5$.

  Next is the definition of the translation~$\tau^B$ applied to the expressions in the body of a rule.
  Compared to the definition in~\cite[Section~6]{LifschitzEtAl2019VerifyingStrongEquivalence} the following definition is extended to handle atoms containing pools.

  \begin{definition}[$\tau^B$]\hfill
    \begin{itemize}
      \item $\tau^B(p(t_1, \dots, t_k))$ is
        \[
          \exists Z_1 \dots Z_k (\val{t_1}{Z_1} \land \dots \land \val{t_k}{Z_k} \land p(Z_1, \dots, Z_k))
        \]
      \item $\tau^B(\mynot p(t_1, \dots, t_k))$ is
        \[
          \exists Z_1 \dots Z_k (\val{t_1}{Z_1} \land \dots \land \val{t_k}{Z_k} \land \neg p(Z_1, \dots, Z_k))
        \]
      \item $\tau^B(\mynot \mynot p(t_1, \dots, t_k))$ is
        \[
          \exists Z_1 \dots Z_k (\val{t_1}{Z_1} \land \dots \land \val{t_k}{Z_k} \land \neg \neg p(Z_1, \dots, Z_k))
        \]
      \item $\tau^B(p(t_1; \dots; t_k))$ is
        \[
          \tau^B(p(t_1)) \lor \dots \lor \tau^B(p(t_k))
        \]
      \item $\tau^B(\mynot p(t_1; \dots; t_k))$ is
        \[
          \tau^B(\mynot p(t_1)) \lor \dots \lor \tau^B(\mynot p(t_k))
        \]
      \item $\tau^B(\mynot \mynot p(t_1; \dots; t_k))$ is
        \[
          \tau^B(\mynot \mynot p(t_1)) \lor \dots \lor \tau^B(\mynot \mynot p(t_k))
        \]
      \item $\tau^B(t_1 \sim t_2)$ where $\sim \in \{=, \neq, <, >, \leq, \geq\}$ is
        \[
          \exists Z_1 Z_2 (\val{t_1}{Z_1} \land \val{t_2}{Z_2} \land Z_1 \sim Z_2)
        \]
    \end{itemize}
    where each~$Z_i$ is a fresh program variable.
  \end{definition}

  For example, applying~$\tau^B$ to the atom $q(1;2;3)$ from Rule~\ref{eq:pool_body} results in the following formula:
  \[
    \tau^B(q(1)) \lor \tau^B(q(2)) \lor \tau^B(q(3)).
  \]
  Applying the definition of $\tau^B$ recursively gives
  \[
    \exists Z_1 (\val{1}{Z_1} \land q(Z_1)) \lor \exists Z_1 (\val{2}{Z_2} \land q(Z_2)) \lor \exists Z_1 (\val{3}{Z_3} \land q(Z_3))).
  \]
  Simplifying this formula results in
  \[
    q(1) \lor q(2) \lor q(3),
  \]
  which expresses exactly that the atom $q(1;2;3)$ in the body of a rule is satisfied if at least one of~$q(1), q(2),$ or~$q(3)$ is satisfied.

  The translation~$\tau^H$, which is applied to expressions in the head of a rule, is defined as follows.
  Again, the definition is extended to handle atoms containing pools compared to the definition in~\cite[Section~6]{LifschitzEtAl2019VerifyingStrongEquivalence}.

  \begin{definition}[$\tau^H$]\hfill
    \begin{itemize}
      \item $\tau^H(p(t_1, \dots, t_k))$ is
        \[
          \forall Z_1 \dots Z_k (\val{t_1}{Z_1} \land \dots \land \val{t_k}{Z_k} \rightarrow p(Z_1, \dots, Z_k))
        \]
      \item $\tau^H(p(t_1; \dots; t_k))$ is
        \[
          \tau^H(p(t_1)) \land \dots \land \tau^H(p(t_k))
        \]
      \item $\tau^H(\{p(t_1, \dots, t_k)\})$ is
        \[
          \forall Z_1 \dots Z_k (\val{t_1}{Z_1} \land \dots \land \val{t_k}{Z_k} \rightarrow p(Z_1, \dots, Z_k) \lor \neg p(Z_1, \dots, Z_k))
        \]
      \item $\tau^H(\{p(t_1; \dots; t_k)\})$ is
        \[
          \tau^H(\{p(t_1)\}) \land \dots \land \tau^H(\{p(t_k)\})
        \]
      \item $\tau^H()$ (the empty head) is $\perp$
    \end{itemize}
    where each~$Z_i$ is a fresh program variable.
  \end{definition}

  For example, applying~$\tau^H$ to the head of Rule~\ref{eq:pool_head} results in
  \[
    \tau^H(p(X)) \land \tau^H(p(Y)).
  \]
  Applying the definition of~$\tau^H$ recursively and simplifying the formula gives
  \[
    p(X) \land p(Y)
  \]
  which states exactly that if the body of Rule~\ref{eq:pool_head} is satisfied, both~$p(X)$~and~$p(Y)$ have to be true.

  Finally, $\tau^*$~can be defined using these components.
  \begin{definition}[$\tau^*$]
    \[
      \tau^*(H \leftarrow B_1 \land \dots \land B_n)
    \]
    is defined as the universal closure of the formula
    \[
      \tau^B(B_1) \land \dots \land \tau^B(B_n) \rightarrow \tau^H(H).
    \]
  \end{definition}
  For any program~$\Pi$, $\tau^* \Pi$ is the set of formulas $\tau^* R$ for every rule~$R$~in~$\Pi$.

  For example, the result of applying~$\tau^*$ to the following program:
  \begin{equation}
    \begin{array}{l}
      \{ p \} \\
      :- \mynot p, q
    \end{array}
    \label{eq:choice_negation}
  \end{equation}
  are the following two formulas:
  \begin{equation*}
    \begin{array}{l}
      \top \rightarrow (p \lor \neg p) \\
      (\neg p \land q) \rightarrow \bot
    \end{array}
  \end{equation*}
  Applying~$\tau^*$ to Program~\ref{eq:pool_head} gives
  \begin{equation*}
    \begin{aligned}
      \forall U_1 U_2 (&\exists Z_1 Z_2 (Z_1 = U_1 \land Z_2 = U_2 \land q(Z_1,Z_2)) \rightarrow \\
      &(\forall Z_3 (Z_3 = U_1 \rightarrow p(Z_3)) \land \forall Z_4 (Z_4 = U_2 \rightarrow p(Z_4))))
    \end{aligned}
  \end{equation*}

\section{Expressing Strong Equivalence in Classical Logic}
\label{sec:semantics_map}
  After applying~$\tau^*$, the resulting formulas will still have the semantics of the logic of here-and-there.
  Therefore, it is still not possible to verify the strong equivalence of two programs using theorem provers for classical first order logic.
  An exception is the case of two positive logic programs, as in that case, the semantics of the logic of here-and-there coincides with classical logic.
  This is the basis for the previous version of \anthem{}~\cite{LifschitzEtAl2019VerifyingStrongEquivalence}.

  However, with the transformation~$\sigma^*$ given by Definition~\ref{def:mapping}, it is possible to reduce the problem of verifying strong equivalence to classical first-order logic, by expressing the semantics of the formulas from the logic of here-and-there in classical logic.
  This idea is formalised in the remainder of this section.

  The notation~$F^\text{prop}$ is used to signify the infinitary propositional formula corresponding to (the closed formula)~$F$, for details see~\cite[Section~5]{LifschitzEtAl2019VerifyingStrongEquivalence}.

  The following lemma relates the strong equivalence of two logic programs to the strong equivalence of their translation obtained by applying~$\tau^*$:

  \begin{lemma}
    A program~$\Pi_1$ is strongly equivalent to a program~$\Pi_2$ if and only if $(\tau^* \Pi_1)^\text{prop}$ is strongly equivalent to $(\tau^* \Pi_2)^\text{prop}$.
    \label{lem:extension}
  \end{lemma}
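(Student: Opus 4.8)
The plan is to reduce this statement to a known correspondence between the stable models of a program and the stable models of its propositional translation, and then invoke the characterisation of strong equivalence via arbitrary program extensions. The key fact I would rely on is that the translation $\tau^*$ is sound with respect to the semantics of the input language: for any program $\Pi$ and any suitable context, the answer sets of $\Pi$ correspond to the stable models of $(\tau^* \Pi)^\text{prop}$. This is essentially the correctness property of $\tau^*$ established in~\cite[Section~6]{LifschitzEtAl2019VerifyingStrongEquivalence}, which I would extend to the new constructs (pools, tuples) by checking that the added cases of $\val{t}{Z}$, $\tau^B$, and $\tau^H$ preserve this correspondence. Concretely, for pools one verifies that $\val{(t_1;\dots;t_k)}{Z}$ is satisfied exactly by the union of the value sets of the $t_i$, and that $\tau^B$ and $\tau^H$ distribute over pools in a way matching the intended reading of an atom with a pooled argument; these are routine inductions on term structure mirroring the worked examples in Section~\ref{sec:translation}.

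The main argument then proceeds as follows. First I would spell out that $\Pi_1$ is strongly equivalent to $\Pi_2$ iff for every program $\Pi$, the answer sets of $\Pi_1 \cup \Pi$ and $\Pi_2 \cup \Pi$ coincide (Definition of strong equivalence). Next I would observe that $\tau^*$ commutes with union up to logical equivalence: $(\tau^*(\Pi_i \cup \Pi))^\text{prop}$ is (strongly equivalent to) $(\tau^* \Pi_i)^\text{prop} \cup (\tau^* \Pi)^\text{prop}$, since $\tau^*$ acts rule-by-rule. Using the soundness of $\tau^*$, the answer sets of $\Pi_i \cup \Pi$ are in one-to-one correspondence with the stable models of $(\tau^* \Pi_i)^\text{prop} \cup (\tau^* \Pi)^\text{prop}$. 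For the forward direction, given strong equivalence of $\Pi_1$ and $\Pi_2$ and an arbitrary set $\Gamma$ of infinitary propositional formulas to test strong equivalence of the translations, I would need to realise $\Gamma$ (or enough of it) as $(\tau^* \Pi)^\text{prop}$ for some program $\Pi$; since this may not be literally possible, the cleaner route is to argue at the level of the here-and-there characterisation instead.

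Thus the route I actually expect to take is via Theorem~\ref{thm:strong_equivalence}: strong equivalence of $\Pi_1$ and $\Pi_2$ is equivalent to HT-equivalence of their propositional representations, and likewise strong equivalence of $(\tau^* \Pi_1)^\text{prop}$ and $(\tau^* \Pi_2)^\text{prop}$ is HT-equivalence of these same infinitary formulas. So it suffices to show that the propositional representation of $\Pi_i$ is HT-equivalent to $(\tau^* \Pi_i)^\text{prop}$, for which I would cite the HT-soundness of $\tau^*$ from~\cite{LifschitzEtAl2019VerifyingStrongEquivalence}, extended to pools and tuples by the inductions sketched above. Then HT-equivalence of the representations of $\Pi_1$ and $\Pi_2$ is the same statement as HT-equivalence of $(\tau^* \Pi_1)^\text{prop}$ and $(\tau^* \Pi_2)^\text{prop}$, giving the lemma. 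The main obstacle is verifying that the newly added clauses of $\tau^*$ are HT-correct — that is, that an atom with a pooled or tupled argument translates to a here-and-there equivalent formula — rather than anything in the high-level equivalence bookkeeping, which is immediate once that correctness is in hand.
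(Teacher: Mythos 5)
Your proposal is essentially sound in outline, but it takes a different route from the paper, which does not re-prove this lemma at all: the paper observes that the statement is precisely Proposition~4 of \cite{LifschitzEtAl2019VerifyingStrongEquivalence} on the original input language, notes that the extended~$\tau^*$ coincides with the original translation there, and reduces the new content to generalising that proposition to the added constructs (pools and tuples as program terms, atoms with pools). What you sketch is, in effect, a reconstruction of the proof of that cited proposition — converting both program-level strong equivalence and strong equivalence of the translations into equivalence in the logic of here-and-there, and then showing that the propositional representation of~$\Pi_i$ is HT-equivalent to $(\tau^* \Pi_i)^\text{prop}$, with the genuinely new work confined to the pool/tuple clauses of $\valop$, $\tau^B$, and $\tau^H$. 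That is a legitimate and more self-contained argument, and you were right to abandon your first route: arbitrary sets of infinitary formulas cannot in general be realised as $(\tau^* \Pi)^\text{prop}$, so arguing directly from the definition of strong equivalence of the translations does not go through. One caveat you should make explicit: Theorem~\ref{thm:strong_equivalence} as stated (following \cite{LifschitzEtAl2001StronglyEquivalentLogic}) covers only finite propositional formulas, whereas both sides of your reduction involve infinitary objects — the propositional representation of a first-order program is its (generally infinite) grounding, and $(\tau^* \Pi_i)^\text{prop}$ is an infinitary formula — so you need the generalisation of the strong-equivalence characterisation to the infinitary logic of here-and-there, which is exactly the point the thesis itself flags when it passes from this lemma to Lemma~\ref{lem:extension_ht}. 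With that generalisation invoked explicitly (and the HT-correctness of the new pool/tuple cases actually carried out rather than only announced), your argument closes; as written, it silently uses the finite theorem outside its stated scope, which is the one real gap in the sketch.
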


  This corresponds to~\cite[Proposition~4]{LifschitzEtAl2019VerifyingStrongEquivalence}.
  As a reminder, the translation~$\tau^*$ in this thesis differs to the one in~\cite[Section~6]{LifschitzEtAl2019VerifyingStrongEquivalence} by the extensions for pools and tuples as program terms as well as atoms containing pools.
  It is clear that on the input language from~\cite[Section~2]{LifschitzEtAl2019VerifyingStrongEquivalence} the two translations coincide.
  Lemma~\ref{lem:extension} then follows from generalising~\cite[Proposition~4]{LifschitzEtAl2019VerifyingStrongEquivalence} to the extensions of the input language.

  It is not possible yet to apply Theorem~\ref{thm:strong_equivalence} (which expresses strong equivalence in the logic of here-and-there) to Lemma~\ref{lem:extension} as the formulas $(\tau^* \Pi_i)^\text{prop}$, ${i=1,2}$, are infinitary formulas.
  In contrast, Theorem~\ref{thm:strong_equivalence} only covers the case of finite formulas.
  However, Theorem~\ref{thm:strong_equivalence} can be generalised to an infinitary logic of here-and-there.
  Applying this generalisation to Lemma~\ref{lem:extension} leads to the following lemma:

  \begin{lemma}
      A program~$\Pi_1$ is strongly equivalent to a program~$\Pi_2$ if and only if $(\tau^* \Pi_1)^\text{prop}$ is equivalent to $(\tau^* \Pi_2)^\text{prop}$ in the logic of here-and-there\footnote{Assuming only standard interpretations, i.e.\ interpretations that interpret equality and arithmetic in the usual way.}.
    \label{lem:extension_ht}
  \end{lemma}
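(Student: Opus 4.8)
The plan is to obtain this lemma by composing Lemma~\ref{lem:extension} with an infinitary strengthening of Theorem~\ref{thm:strong_equivalence}. By Lemma~\ref{lem:extension}, $\Pi_1$ is strongly equivalent to $\Pi_2$ if and only if $(\tau^* \Pi_1)^\text{prop}$ is strongly equivalent to $(\tau^* \Pi_2)^\text{prop}$ as infinitary propositional formulas. Hence it suffices to show that two infinitary propositional formulas are strongly equivalent exactly when they are equivalent in the (infinitary) logic of here-and-there; the claimed lemma is then just the instantiation of this biconditional at $(\tau^* \Pi_1)^\text{prop}$ and $(\tau^* \Pi_2)^\text{prop}$.

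First I would recall that Theorem~\ref{thm:strong_equivalence} --- the characterisation of strong equivalence of finite programs by equivalence of their propositional representations in the logic of here-and-there --- lifts to infinitary propositional theories: two sets of infinitary propositional formulas have the same stable models in every context obtained by adding an arbitrary set of infinitary propositional formulas if and only if they are equivalent in infinitary here-and-there. I would invoke this generalisation (the one already alluded to in the paragraph preceding the lemma) rather than reprove it, and then observe that reading the satisfiability clauses of Definition~\ref{def:satisfiability} over infinite conjunctions and disjunctions yields precisely the infinitary HT semantics involved, so that ``equivalent in the logic of here-and-there'' in the statement is the intended notion.

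Second, I would apply this to $F_i := (\tau^* \Pi_i)^\text{prop}$. The one point needing care is the alphabet and the implicit quantification over interpretations: $F^\text{prop}$ is formed relative to interpretations that interpret equality and arithmetic in the usual way, so the ground atoms occurring in $F_1$ and $F_2$, and hence the HT-interpretations over which equivalence is tested, are exactly those arising from such standard interpretations. This is what the footnote records, and it is what makes the instantiation of the infinitary biconditional at $F_1, F_2$ legitimate.

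The main obstacle I anticipate is bookkeeping rather than conceptual: checking that the notion of strong equivalence attached to $(\tau^* \Pi_i)^\text{prop}$ by Lemma~\ref{lem:extension} (inherited from the infinitary propositional program semantics used to define the semantics of the input language) coincides with the notion of strong equivalence used in the infinitary strong-equivalence theorem, and verifying that the restriction to standard interpretations throughout leaves both sides of the biconditional unchanged. Once these identifications are made explicit, the proof collapses to a two-line composition of Lemma~\ref{lem:extension} with the infinitary generalisation of Theorem~\ref{thm:strong_equivalence}.
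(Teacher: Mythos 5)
Your proposal matches the paper's own justification: the lemma is obtained exactly by composing Lemma~\ref{lem:extension} with the generalisation of Theorem~\ref{thm:strong_equivalence} to an infinitary logic of here-and-there, with the footnote's restriction to standard interpretations playing the same bookkeeping role you describe. No substantive difference in approach.
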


  This lemma already holds for the version of~$\tau^*$ from~\cite[Section~6]{LifschitzEtAl2019VerifyingStrongEquivalence}.
  In order to now apply the transformation~$\sigma^*$, it first needs to be generalised to formulas with quantifiers.
  This is done by extending Definition~\ref{def:simplified_mapping} by the following
  \begin{equation*}
    \begin{gathered}
      \sigma^*(\forall X F(X)) = \forall X \sigma^*(F(X)), \\
      \sigma^*(\exists X F(X)) = \exists X \sigma^*(F(X)).
    \end{gathered}
  \end{equation*}
  Using this generalised~$\sigma^*$, the following theorem can be established:

  \begin{theorem}[Strong Equivalence in Classical Logic]
      A program~$\Pi_1$ is strongly equivalent to a program~$\Pi_2$ if and only if $\sigma^*(\tau^* \Pi_1)$ is equivalent to $\sigma^*(\tau^* \Pi_2)$ in every classical interpretation~$I$\footnotemark[1] (over the alphabet $\atoms \cup \atoms^\prime$) such that $I \models \paxioms$.
    \label{thm:strong_eq_classical}
  \end{theorem}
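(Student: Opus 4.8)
The plan is to obtain the theorem by chaining Lemma~\ref{lem:extension_ht} with a suitably generalised version of Theorem~\ref{thm:ht_equivalence}. Lemma~\ref{lem:extension_ht} already tells us that $\Pi_1$ and $\Pi_2$ are strongly equivalent if and only if $(\tau^*\Pi_1)^\text{prop}$ and $(\tau^*\Pi_2)^\text{prop}$ are equivalent in the logic of here-and-there (restricted to standard interpretations). So it remains to show that this HT-equivalence of the two (infinitary, first-order-derived) formulas holds if and only if $\sigma^*(\tau^*\Pi_1)$ and $\sigma^*(\tau^*\Pi_2)$ are equivalent in every classical standard interpretation $I$ over $\atoms \cup \atoms^\prime$ with $I \models \paxioms$. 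This is exactly the content of Theorem~\ref{thm:ht_equivalence}, except that Theorem~\ref{thm:ht_equivalence} is stated for finite propositional formulas whereas here we deal with first-order formulas (equivalently, via $(\cdot)^\text{prop}$, with infinitary propositional ones), and $\sigma^*$ has only just been extended to quantifiers. So the real work is to lift Lemmas~\ref{lem:1_to_1} and~\ref{lem:satisfiability}, and hence Theorem~\ref{thm:ht_equivalence}, to this more general setting.

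First I would re-examine Lemma~\ref{lem:1_to_1}. Fixing a standard domain, an HT-interpretation is still a pair~$\htint{H}{T}$ of sets of ground atoms with $H \subseteq T$, and the bijection $\htint{H}{T} \mapsto H \cup T^\prime$ between such pairs and classical standard interpretations over $\atoms \cup \atoms^\prime$ satisfying~$\paxioms$ goes through verbatim; the ambient alphabet~$\atoms$ is now the (infinite) set of ground atoms of the first-order language, but the proof never used finiteness. Second, I would generalise Lemma~\ref{lem:satisfiability} by induction on the formula, adding the two quantifier cases. For $\phi = \forall X\, F(X)$ we have $\sigma^*(\phi) = \forall X\, \sigma^*(F(X))$; using that $\sigma^*$ commutes with substitution of ground terms for program or integer variables — a routine syntactic check, relying on the fact that~$X$ is not an atom so priming leaves it untouched — the chain $\htint{H}{T} \models \forall X\, F(X)$ iff for every domain element~$d$, $\htint{H}{T}\models F(d)$, iff (induction hypothesis) for every~$d$, $I \models \sigma^*(F(d))$, iff $I \models \forall X\, \sigma^*(F(X))$, follows; the~$\exists$ case is symmetric. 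For infinitary conjunctions and disjunctions the existing~$\land$ and~$\lor$ cases carry over since $\sigma^*$ distributes over them, while negation and implication are unchanged. Equivalently, one may work entirely at the infinitary-propositional level after observing that $\sigma^*$ commutes with the grounding operation $(\cdot)^\text{prop}$, i.e.\ $\sigma^*(F)^\text{prop}$ and $(\sigma^*F)^\text{prop}$ coincide, so that the generalised Lemma~\ref{lem:satisfiability} applied to~$F^\text{prop}$ yields the statement for $F = \tau^*\Pi_i$.

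With the generalised lemmas in hand, the generalised Theorem~\ref{thm:ht_equivalence} reads: for first-order formulas~$\phi_1,\phi_2$, the equivalence $\phi_1 \leftrightarrow \phi_2$ is valid in the (infinitary) logic of here-and-there over standard interpretations if and only if $\sigma^*(\phi_1) \leftrightarrow \sigma^*(\phi_2)$ holds in every classical standard interpretation~$I$ over $\atoms \cup \atoms^\prime$ with $I \models \paxioms$; its proof is the same short argument as for Theorem~\ref{thm:ht_equivalence}, now invoking the generalised Lemmas~\ref{lem:1_to_1} and~\ref{lem:satisfiability}. Applying this with $\phi_i = (\tau^*\Pi_i)^\text{prop}$ (or directly $\phi_i = \tau^*\Pi_i$, via the commutation of $\sigma^*$ with $(\cdot)^\text{prop}$) and combining with Lemma~\ref{lem:extension_ht} gives the theorem.

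I expect the main obstacle to be the bookkeeping around the interaction of the three operations $\tau^*$, $(\cdot)^\text{prop}$, and $\sigma^*$, together with the restriction to standard interpretations: one must verify that $\sigma^*$ genuinely commutes with grounding and with ground substitutions, so that quantifiers and the infinitary connectives introduced by $(\cdot)^\text{prop}$ are handled uniformly, and that priming — which acts on predicate symbols — does not disturb the sorts or the fixed interpretation of equality and arithmetic that the standard interpretations impose. Once this is settled, the remainder is a routine induction plus the already-proven bijection and the short combination argument.
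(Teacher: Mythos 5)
Your proposal follows essentially the same route as the paper: the paper justifies Theorem~\ref{thm:strong_eq_classical} by exactly this chain — Lemma~\ref{lem:extension_ht}, then a generalisation of Theorem~\ref{thm:ht_equivalence} to the quantifier extension of~$\sigma^*$, then the passage between $\sigma^*(\tau^*\Pi_i)$ and $(\sigma^*(\tau^*\Pi_i))^\text{prop}$ via the definition of~$(\cdot)^\text{prop}$, which is your commutation step. Your write-up is in fact more detailed than the paper's own sketch (the paper does not spell out the lifted versions of Lemmas~\ref{lem:1_to_1} and~\ref{lem:satisfiability}), and it is correct.
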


  The conditions to the classical interpretation~$I$ (i.e. the alphabet $\atoms \cup \atoms^\prime$ and $I \models \paxioms$) ensure that~$I$ corresponds to a valid HT-interpretation.
  This theorem is justified by the following equivalent conditions:

  \begin{enumerate}
    \item $\Pi_1$ is strongly equivalent to $\Pi_2$
    \item $(\tau^* \Pi_1)^\text{prop}$ is equivalent to $(\tau^* \Pi_2)^\text{prop}$ in the logic of here-and-there
    \item $(\sigma^*(\tau^* \Pi_1))^\text{prop}$ is equivalent to $(\sigma^*(\tau^* \Pi_2))^\text{prop}$ in every classical interpretation~$I$ (over the alphabet $\atoms \cup \atoms^\prime$) such that $I \models \paxioms$
    \item $\sigma^*(\tau^* \Pi_1)$ is equivalent to $\sigma^*(\tau^* \Pi_2)$ in every classical interpretation~$I$ (over the alphabet $\atoms \cup \atoms^\prime$) such that $I \models \paxioms$
  \end{enumerate}

  The equivalence of~1 and~2 is precisely Lemma~\ref{lem:extension_ht}.
  The equivalence of~2 and~3 follows from generalising Theorem~\ref{thm:ht_equivalence} (which describes how the equivalence of two formulas in the logic of here-and-there can be expressed in classical logic) to the extension of~$\sigma^*$ to quantifiers.
  The equivalence of~3 and~4 follows from the definition of~$F^\text{prop}$ for a Formula~$F$ (see~\cite[section~5]{LifschitzEtAl2019VerifyingStrongEquivalence}).

  \chapter{Implementation}
\label{chap:implementation}
  The extension of~$\tau^*$ as well as the additional transformation~$\sigma^*$ are implemented in \anthem{}~0.3\footnote{https://github.com/janheuer/anthem-1/releases/tag/v0.3}.

  To obtain the Abstract Syntax Tree (AST) of the given input program \anthem{} makes use of the \clingo{}\footnote{https://github.com/potassco/clingo} API.
  The AST is then transformed into a first-order AST representing the first-order formulas using~$\tau^*$ (implemented using the parts described in Section~\ref{sec:translation}).
  During the transformation, a flag keeps track of the semantics of the formulas.
  This flag is initialised to the semantics of classical logic.
  If a rule contains either negation or a choice this flag is set to the semantics of the logic of here-and-there.

  Depending on how the semantics-flag is set after the transformation defined by~$\tau^*$ terminates (i.e. if it set to the semantics of the logic of here-and-there), $\sigma^*$~is applied.
  Before transforming the formulas, the respective prime atoms (the actual name of the ``primed'' atoms depends on the output format as explained later) and the prime axioms are created.
  The implementation of the actual transformation defined by~$\sigma^*$ (Definition~\ref{def:simplified_mapping}) is done in two steps.
  First, all formulas are duplicated.
  Second, in the first copy all negated atoms are replaced by their ``primed'' variant, and in the second copy all atoms are replaced by their ``primed'' variant.

  This is a slight simplification of~$\sigma^*$ as defined in Definition~\ref{def:simplified_mapping}.
  For example, applying~$\sigma^*$ (as in Definition~\ref{def:simplified_mapping}) to the following formula
  \[
    \forall U_1 (\exists X_1 (X_1 = U_1 \land \neg q(X_1)) \rightarrow \forall X_2 (X_2 = U_1 \rightarrow p(X_2)))
  \]
  (which is the result of applying~$\tau^*$ to the rule $p(X) \leftarrow \mynot q(X)$) results in
  \begin{align*}
    \forall U_1 (\exists X_1 (X_1 = U_1 \land \neg q^\prime(X_1)) \rightarrow \forall X_2 (&(X_2 = U_1 \rightarrow p(X_2)) \land \\
                                                                                           &(X_2 = U_1 \rightarrow p^\prime(X_2))) \land \\
                 \exists X_3 (X_3 = U_1 \land \neg q^\prime(X_3)) \rightarrow \forall X_4 (&(X_4 = U_1 \rightarrow p^\prime(X_4)) \land \\
                                                                                           &(X_4 = U_1 \rightarrow p^\prime(X_4)))).
  \end{align*}
  The implemented version, however, produces the following formula
  \begin{align*}
    &\forall U_1 (\exists X_1 (X_1 = U_1 \land \neg q^\prime(X_1)) \rightarrow \forall X_2 (X_2 = U_1 \rightarrow p(X_2))) \land \\
    &\forall U_2 (\exists X_3 (X_3 = U_2 \land \neg q^\prime(X_3)) \rightarrow \forall X_4 (X_4 = U_2 \rightarrow p^\prime(X_4))).
  \end{align*}
  It is clear that the implemented~$\sigma^*$ is equivalent to Definition~\ref{def:simplified_mapping} (in the context of the considered input language).

  Finally, \anthem{} outputs the obtained first-order formulas.
  If \anthem{} is used with one input program, \anthem{} prints all the formulas obtained from the rules in the program.
  For example, given Program~\ref{lst:colour_pool} \anthem{} produces the following output
  \begin{lstlisting}[language=human-readable]
(#true -> (forall X1 ((X1 = r) -> colour(X1)) and
           forall X2 ((X2 = g) -> colour(X2)) and
           forall X3 ((X3 = b) -> colour(X3))))
  \end{lstlisting}
  Additionally, \anthem{} prints the following info message
  \begin{lstlisting}[language=anthem-output]
info: output semantics: classical logic
  \end{lstlisting}
  indicating the semantics of the output.
  For Program~\ref{eq:choice_negation} \anthem{} prints
  \begin{lstlisting}[language=human-readable]
(p -> p')
(q -> q')
(#true -> (p or not p'))
(#true -> (p' or not p'))
((not p' and q) -> #false)
((not p' and q') -> #false)
  \end{lstlisting}
  as well as the info message
  \begin{lstlisting}[language=anthem-output]
info: mapped to output semantics: classical logic
  \end{lstlisting}
  indicating that~$\sigma^*$ was applied to transform the formulas to be in the semantics of classical logic.
  The ``primed'' atoms are indicated by appending a prime to the respective atom.
  The first two formulas in the output are the prime axioms.

  If \anthem{} is called with two input programs, some additional steps are done before the output.
  First, the formulas of each program are turned into a conjunction.
  Then a formula is constructed which states the equivalence of the two conjunctions.
  For example, calling \anthem{} with Programs~\ref{lst:colour_atoms}~and~\ref{lst:colour_pool} results in the following output
  \begin{lstlisting}[language=human-readable]
(((#true -> forall X1 ((X1 = r) -> colour(X1))) and
  (#true -> forall X2 ((X2 = g) -> colour(X2))) and
  (#true -> forall X3 ((X3 = b) -> colour(X3))))
<->
 ((#true -> (forall X4 ((X4 = r) -> colour(X4)) and
             forall X5 ((X5 = g) -> colour(X5)) and
             forall X6 ((X6 = b) -> colour(X6))))))
  \end{lstlisting}

  In order to verify the strong equivalence of two programs using automated theorem provers, \anthem{} provides the output format TPTP.
  TPTP~\cite{Sutcliffe2009TPTPProblemLibrary} is a library of example problems for theorem provers written in a standard language for theorem provers provided by TPTP.
  Specifically, \anthem{} uses the typed first-order form (TFF~\cite{SutcliffeEtAl2012TPTPTypedFirstOrder}) with integer arithmetic.
  In the TPTP output format \anthem{} not only outputs the formulas corresponding to the input program(s) but also a type definition for every atom.
  Furthermore, auxiliary definitions and axioms are produced to encode the type system of the output language of \anthem{}.
  \bigskip

  For example, in the TPTP format, the formulas representing Program~\ref{eq:choice_negation} are
  \begin{lstlisting}[language=tptp]
$true => (p | (~p__prime__))
$true => (p__prime__ | (~p__prime__))
((~p__prime__) & q) => $false
((~p__prime__) & q__prime__) => $false
  \end{lstlisting}
  here the ``primed'' atoms are represented by appending \lstinline{__prime__} to the respective atom (as a prime cannot be part of the name in the TPTP language).
  The ``prime'' axioms are represented as formulas of the type axiom in the TPTP output.

  The output of \anthem{} can then be used as the input to an automated theorem prover to verify the strong equivalence of the two input programs.
  Some options for automated theorem provers supporting the TFF language with integer arithmetic are presented in Section~\ref{sec:atps}.
  The remainder of Chapter~\ref{chap:experiments} compares the theorem provers on examples of positive programs, programs with pools, programs with negation, and programs with simple choice rules.

  \chapter{Experimental Evaluation}
\label{chap:experiments}
  This chapter first provides an overview of available automated theorem provers which can be used together with \anthem{}.
  Afterwards, these provers are compared for their ability to the verify strong equivalence of different logic programs.
  The example logic programs include positive programs, programs with pools, programs with negation, and programs with simple choice rules.

\section{Automated Theorem Provers}
\label{sec:atps}
  A large number of automated theorem provers support the TPTP language\footnote{A list of some available options: http://www.tptp.org/cgi-bin/SystemOnTPTP}.
  However, the number of provers supporting the TPTP dialect TFF with integer arithmetic is significantly smaller.
  From the available options, some have also not been updated in a long time and trying to install them results in errors.

  The following provers are tested in the remainder of this chapter: \cvc{} version:~1.8\footnote{https://github.com/CVC4/CVC4/releases/tag/1.8}, \princess{} version:~2020-03-12\footnote{http://www.philipp.ruemmer.org/princess-sources.shtml}, \vampire{} version:~4.5.1\footnote{https://github.com/vprover/vampire/releases/tag/4.5.1}, and \zipperposition{} version:~1.5\footnote{https://github.com/sneeuwballen/zipperposition/releases/tag/1.5}.

  All provers were used with a timeout of 300 s (timeout are indicated by ``---'' in the following sections).

  \cvc{} was used with the options \lstinline[breaklines=true]{--lang tptp --stats --tlimit=300000}.
  \princess{} was used with the options \lstinline[breaklines=true]{-inputFormat=tptp -portfolio=casc -timeout=300000 }.
  \vampire{} is the only prover in this comparison which provides parallelisation.
  Therefore, \vampire{} was used in two different configurations: not making use of the parallelisation with the options \lstinline[breaklines=true]{--mode casc --time_limit 300} (denoted by \vampire{}), and using the parallelisation with the additional option \lstinline{--cores 4} (denoted by \vampirep).
  \zipperposition{} was used with the option \lstinline{-timeout 300}.

  For some examples, \cvc{} terminates before the timeout but still does not manage to prove the example.
  In these cases, \cvc{} terminates with the message \lstinline{GaveUp} and the status shows \lstinline{unknown (INCOMPLETE)} (this is indicated by \mbox{``---*''} in the following sections).
  The theorem proving implemented in \cvc{} seems to be incomplete, and \cvc{} recognises when it is not possible to prove a theorem as a result of the incompleteness (the documentation of \cvc{} does not provide any details on this status message).

\section{Examples of Positive Logic Programs}
  First, the provers are compared on some positive logic programs.
  The examples used in this section are all taken from~\cite[Section~9]{LifschitzEtAl2019VerifyingStrongEquivalence}.
  Therefore, only the results of the comparison will be discussed.
  All of the examples are strongly equivalent.

  \vspace{\extop}\noindent
  \begin{tabular}{p{\excol}  p{\excol}}
    \textbf{Program 1.A} & \textbf{Program 1.B} \\
    \begin{tabularlstlisting}[language=input]
p(X+1) :- q(X).
    \end{tabularlstlisting}
    &
    \begin{tabularlstlisting}[language=input]
p(X) :- q(X-1).
    \end{tabularlstlisting}
  \end{tabular}

  \vspace{\extop}\noindent
  \begin{tabular}{p{\excol}  p{\excol}}
    \textbf{Program 2.A} & \textbf{Program 2.B} \\
    \begin{tabularlstlisting}[language=input]
p(X+X) :- q(X).
    \end{tabularlstlisting}
    &
    \begin{tabularlstlisting}[language=input]
p(X+Y) :- q(X), X=Y.
    \end{tabularlstlisting}
  \end{tabular}

  \vspace{\extop}\noindent
  \begin{tabular}{p{\excol}  p{\excol}}
    \textbf{Program 3.A} & \textbf{Program 3.B} \\
    \begin{tabularlstlisting}[language=input]
p(X+X) :- q(X).
    \end{tabularlstlisting}
    &
    \begin{tabularlstlisting}[language=input]
p(2*X) :- q(X).
    \end{tabularlstlisting}
  \end{tabular}

  \vspace{\extop}\noindent
  \begin{tabular}{p{\excol}  p{\excol}}
    \textbf{Program 4.A} & \textbf{Program 4.B} \\
    \begin{tabularlstlisting}[language=input]
p(X+Y) :- q(X), X=Y.
    \end{tabularlstlisting}
    &
    \begin{tabularlstlisting}[language=input]
p(2*X) :- q(X).
    \end{tabularlstlisting}
  \end{tabular}

  \vspace{\extop}\noindent
  \begin{tabular}{p{\excol}  p{\excol}}
    \textbf{Program 5.A} & \textbf{Program 5.B} \\
    \begin{tabularlstlisting}[language=input]
p(X) :- X>3, X<5.
    \end{tabularlstlisting}
    &
    \begin{tabularlstlisting}[language=input]
p(4).
    \end{tabularlstlisting}
  \end{tabular}

  \vspace{\extop}\noindent
  \begin{tabular}{p{\excol}  p{\excol}}
    \textbf{Program 6.A} & \textbf{Program 6.B} \\
    \begin{tabularlstlisting}[language=input]
p(X) :- X<3, X>5.
    \end{tabularlstlisting}
    &
    \begin{tabularlstlisting}[language=input]
q :- q.
    \end{tabularlstlisting}
  \end{tabular}

  \vspace{\extop}\noindent
  \begin{tabular}{p{\excol}  p{\excol}}
    \textbf{Program 7.A} & \textbf{Program 7.B} \\
    \begin{tabularlstlisting}[language=input]
p(X+0).
    \end{tabularlstlisting}
    &
    \begin{tabularlstlisting}[language=input]
p(X+1).
    \end{tabularlstlisting}
  \end{tabular}

  \begin{table}
    \begin{adjustbox}{max width=\textwidth}
      \begin{tabular}{| l *{6}{| c} |}
        \hline
         & \cvc & \princess & \vampire & \vampirep & \zipperposition \\
        \hline
        Example 1 & --- & --- & 6.045 s & 0.042 s & --- \\
        \hline
        Example 2 & 0.018 s & 0.006 s & 0.016 s & 0.008 s & 0.061 s \\
        \hline
        Example 3 & 0.034 s & --- & 58.89 s & 11.017 s & --- \\
        \hline
        Example 4 & 0.034 s & --- & 0.835 s & 1.488 s & --- \\
        \hline
        Example 5 & --- * & --- & 25.39 s & 5.467 s & --- \\
        \hline
        Example 6 & --- * & 24.472 s & 24.636 s & 4.147 s & 0.032 s \\
        \hline
        Example 7 & --- * & --- & 0.126 s & 0.014 s & 0.529 s \\
        \hline
      \end{tabular}
    \end{adjustbox}
    \caption{Results on examples of positive logic programs}
    \label{tab:positive}
  \end{table}

  Table~\ref{tab:positive} shows the results of each theorem prover.
  \vampire{} is the only prover that manages to verify every example.
  Making use of the parallelisation of \vampire{} leads to significant improvements for most examples.
  \cvc{} only manages to verify three of the seven examples but does so faster than \vampire{}.
  \zipperposition{} also manages to verify three examples.
  \princess{} only proves two of the examples in similar times as \vampire{}.

\section{Examples of Logic Programs with Pooling}
  The following four examples contain pools which \anthem{} can handle thanks to the extension of~$\tau^*$ defined in Section~\ref{sec:translation}.
  The results of the theorem provers on these examples can be found in Table~\ref{tab:pooling}

  The first example are Programs~\ref{lst:colour_atoms}~and~\ref{lst:colour_pool} from Chapter~\ref{chap:introduction}, which define the available colours in a graph colouring problem.

  \vspace{\extop}\noindent
  \begin{tabular}{p{\excol}  p{\excol}}
    \textbf{Program 8.A} & \textbf{Program 8.B} \\
    \begin{tabularlstlisting}[language=input]
colour(r).
colour(g).
colour(b).
    \end{tabularlstlisting}
    &
    \begin{tabularlstlisting}[language=input]
colour(r;g;b).
    \end{tabularlstlisting}
  \end{tabular}
  All theorem provers manage to verify the strong equivalence of the programs very quickly.

  The next example shows how a pool can be used to replace an interval.

  \vspace{\extop}\noindent
  \begin{tabular}{p{\excol}  p{\excol}}
    \textbf{Program 9.A} & \textbf{Program 9.B} \\
    \begin{tabularlstlisting}[language=input]
p(1..3).
    \end{tabularlstlisting}
    &
    \begin{tabularlstlisting}[language=input]
p(1;2;3).
    \end{tabularlstlisting}
  \end{tabular}
  While this example seems quite similar to the previous example, both \princess{} and \vampire{} need significantly longer to verify the strong equivalence.
  Using the parallelisation of \vampire{} can again reduce the time.
  \cvc{} also verifies this example quickly, while \zipperposition{} cannot find a proof in the time-limit.

  The following example shows how two rules can be combined into a single rule using a pool in the body of the rule.

  \vspace{\extop}\noindent
  \begin{tabular}{p{\excol}  p{\excol}}
    \textbf{Program 10.A} & \textbf{Program 10.B} \\
    \begin{tabularlstlisting}[language=input]
p(X) :- q(X), X=a.
P(X) :- q(X), X=1..5.
    \end{tabularlstlisting}
    &
    \begin{tabularlstlisting}[language=input]
p(X) :- q(X), X=(a;1..5).
    \end{tabularlstlisting}
  \end{tabular}
  On this example, \princess{} and \zipperposition{} cannot find a proof within the time-limit.
  \cvc{} and \vampire{} manage to verify the strong equivalence in less than a second.

  The final example shows how to write Rule~\ref{eq:pool_head} without using a pool.

  \vspace{\extop}\noindent
  \begin{tabular}{p{\excol}  p{\excol}}
    \textbf{Program 11.A} & \textbf{Program 11.B} \\
    \begin{tabularlstlisting}[language=input]
p(X) :- q(X,Y).
p(Y) :- q(X,Y).
    \end{tabularlstlisting}
    &
    \begin{tabularlstlisting}[language=input]
p(X;Y) :- q(X,Y).
    \end{tabularlstlisting}
  \end{tabular}
  This example is also verified very quickly by \cvc{} and \vampire{}.
  Although a bit slower \princess{} also manages to verify the example, while \zipperposition{} again runs into a timeout.

  \begin{table}
    \begin{adjustbox}{max width=\textwidth}
      \begin{tabular}{| l *{6}{| c} |}
        \hline
         & \cvc & \princess & \vampire & \vampirep & \zipperposition \\
        \hline
        Example 8 & 0.014 s & 1.075 s & 0.008 s & 0.011 s & 0.075 s \\
        \hline
        Example 9 & 0.045 s & 20.226 s & 23.85 s & 4.483 s & --- \\
        \hline
        Example 10 & 0.031 s & --- & 0.121 s & 0.185 s & --- \\
        \hline
        Example 11 & 0.022 s & 0.825 s & 0.01 s & 0.011 s & --- \\
        \hline
      \end{tabular}
    \end{adjustbox}
    \caption{Results on examples of logic programs with pooling}
    \label{tab:pooling}
  \end{table}

\section{Examples of Logic Programs with Negation}
  The next seven examples are programs with negation which \anthem{} can translate to classical first-order logic by making use of~$\sigma^*$ (Section~\ref{sec:semantics_map}).
  The results of the theorem provers on the examples which are strongly equivalent can be found in Table~\ref{tab:negation}.

  The first example replaces a program stating that \lstinline{p} is true when \lstinline{q} or the negation of \lstinline{q} holds by a fact.

  \vspace{\extop}\noindent
  \begin{tabular}{p{\excol}  p{\excol}}
    \textbf{Program 12.A} & \textbf{Program 12.B} \\
    \begin{tabularlstlisting}[language=input]
p :- q.
p :- not q.
    \end{tabularlstlisting}
    &
    \begin{tabularlstlisting}[language=input]
p.
    \end{tabularlstlisting}
  \end{tabular}
  This example corresponds to the example of uniformly equivalent programs, Programs~\ref{ex:uniform_a} and~\ref{ex:uniform_b}, from Chapter~\ref{chap:introduction}.
  The two programs are not strongly equivalent, as by adding the rule \lstinline{q :- p} Program 12.A does not have an answer set while Program 12.B has the answer set
  $\{p, q\}$.
  Therefore, all theorem provers are unable to verify the strong equivalence.

  The next example shows how the previous two programs can be made strongly equivalent: by adding the constraint \lstinline{:- q}.

  \vspace{\extop}\noindent
  \begin{tabular}{p{\excol}  p{\excol}}
    \textbf{Program 13.A} & \textbf{Program 13.B} \\
    \begin{tabularlstlisting}[language=input]
p :- q.
p :- not q.
:- q.
    \end{tabularlstlisting}
    &
    \begin{tabularlstlisting}[language=input]
p.
:- q.
    \end{tabularlstlisting}
  \end{tabular}
  All theorem provers verify that the two programs are strongly equivalent in under a second with \cvc{} and \vampire{} being the quickest.

  The following example shows that the constraint \lstinline{:- not p, q} is redundant in the presence of the rule \lstinline{p :- q}.

  \vspace{\extop}\noindent
  \begin{tabular}{p{\excol}  p{\excol}}
    \textbf{Program 14.A} & \textbf{Program 14.B} \\
    \begin{tabularlstlisting}[language=input]
p :- q.
:- not p, q.
    \end{tabularlstlisting}
    &
    \begin{tabularlstlisting}[language=input]
p :- q.
    \end{tabularlstlisting}
  \end{tabular}
  This example corresponds to the example of strongly equivalent programs, Programs~\ref{ex:strong_eq2_a} and~\ref{ex:strong_eq2_b}, from Chapter~\ref{chap:introduction}.
  Again, \cvc{} and \vampire{} verify the example in a few milliseconds, \zipperposition{} is slightly slower and \princess{} while being the slowest, is still under a second.

  Example~15 is similar to the previous example; however, the programs are more complex as the body atom is now negated, and the rules make use of simple arithmetic expressions.

  \vspace{\extop}\noindent
  \begin{tabular}{p{\excol}  p{\excol}}
    \textbf{Program 15.A} & \textbf{Program 15.B} \\
    \begin{tabularlstlisting}[language=input]
p(2*X) :- not q(X).
:- not p(X+X), not q(X).
    \end{tabularlstlisting}
    &
    \begin{tabularlstlisting}[language=input]
p(2*X) :- not q(X).
    \end{tabularlstlisting}
  \end{tabular}
  The higher complexity of the programs is reflected in the results of the theorem provers.
  \princess{}, \zipperposition{}, and \vampire{} (even using parallelisation) cannot prove the strong equivalence of the programs in 300~s.
  Only \cvc{} verifies the example and does so in just 24~ms.

  The next example shows that the rule \lstinline{p :- not p} (which on its own does not produce an answer set) can be removed from a program stating that~\lstinline{p} is true when~\lstinline{q}~or~\lstinline{not q} holds.

  \vspace{\extop}\noindent
  \begin{tabular}{p{\excol}  p{\excol}}
    \textbf{Program 16.A} & \textbf{Program 16.B} \\
    \begin{tabularlstlisting}[language=input]
p :- q.
p :- not q.
p :- not p.
    \end{tabularlstlisting}
    &
    \begin{tabularlstlisting}[language=input]
p :- q.
p :- not q.
    \end{tabularlstlisting}
  \end{tabular}
  In this example, all theorem provers are successful again, verifying the strong equivalence in less than one second.

  The following example shows that the single rule Program~17.A (which has an unsatisfiable body) is strongly equivalent to the empty program represented by the trivial rule \lstinline{p :- p}.

  \vspace{\extop}\noindent
  \begin{tabular}{p{\excol}  p{\excol}}
    \textbf{Program 17.A} & \textbf{Program 17.B} \\
    \begin{tabularlstlisting}[language=input]
p :- q, not q.
    \end{tabularlstlisting}
    &
    \begin{tabularlstlisting}[language=input]
p :- p.
    \end{tabularlstlisting}
  \end{tabular}
  To verify this \princess{} needs a bit more than half a second while the other provers verify the example in 10~ms or less.

  The final example of this section contains two slightly longer programs.
  The final rule of the first program, \lstinline{s :- not r}, can be safely removed from the program as at least one of~\lstinline{p}~and~\lstinline{q} always holds (because of the first two rules) and so the body of either rule three or four is satisfied.

  \vspace{\extop}\noindent
  \begin{tabular}{p{\excol}  p{\excol}}
    \textbf{Program 18.A} & \textbf{Program 18.B} \\
    \begin{tabularlstlisting}[language=input]
p :- not q.
q :- not p.
r :- p, q.
s :- p.
s :- q.
s :- not r.
    \end{tabularlstlisting}
    &
    \begin{tabularlstlisting}[language=input]
p :- not q.
q :- not p.
r :- p, q.
s :- p.
s :- q.
    \end{tabularlstlisting}
  \end{tabular}
  The strong equivalence of the programs is verified by all theorem provers.
  \cvc{} and \vampirep{} both verify this in around 10~ms.
  \princess{} and \zipperposition{} verify the example in under one second.
  Without the parallelisation \vampire{} takes the longest on this example with slightly over two seconds.

  \begin{table}
    \begin{adjustbox}{max width=\textwidth}
      \begin{tabular}{| l *{6}{| c} |}
        \hline
         & \cvc & \princess & \vampire & \vampirep & \zipperposition \\
        \hline
        Example 13 & 0.010 s & 0.616 s & 0.004 s & 0.003 s & 0.183 s \\
        \hline
        Example 14 & 0.009 s & 0.753 s & 0.006 s & 0.005 s & 0.086 s \\
        \hline
        Example 15 & 0.024 s & --- & --- & --- & --- \\
        \hline
        Example 16 & 0.011 s & 0.810 s & 0.042 s & 0.005 s & 0.135 s \\
        \hline
        Example 17 & 0.010 s & 0.595 s & 0.004 s & 0.005 s & 0.009 s \\
        \hline
        Example 18 & 0.011 s & 0.818 s & 2.132 s & 0.009 s & 0.483 s \\
        \hline
      \end{tabular}
    \end{adjustbox}
    \caption{Results on examples of logic programs with negation}
    \label{tab:negation}
  \end{table}

\section{Examples of Logic Programs with Simple Choice Rules}
  Finally, the examples in this section are logic programs containing simple choice rules, i.e. with just a single atom, and negation.
  The results of the theorem provers on the strongly equivalent examples are given in Table~\ref{tab:choices}.

  The first example attempts to rewrite a logic program, stating that exactly one of~\lstinline{p}~and~\lstinline{q} has to be true, without using choice rules.

  \vspace{\extop}\noindent
  \begin{tabular}{p{\excol}  p{\excol}}
    \textbf{Program 19.A} & \textbf{Program 19.B} \\
    \begin{tabularlstlisting}[language=input]
{ p }.
{ q }.
:- p, q.
:- not p, not q.
    \end{tabularlstlisting}
    &
    \begin{tabularlstlisting}[language=input]
p :- not q.
q :- not p.
    \end{tabularlstlisting}
  \end{tabular}
  However, the two programs are not strongly equivalent as adding both~\lstinline{p}~and~\lstinline{q} as facts results in a valid answer set $\{p, q\}$ for Program~19.B while Program~19.A does not have an answer set with these facts.
  Consequently, none of the theorem provers manages to verify the strong equivalence.

  By adding the constraint \lstinline{:- p, q} to the second program, the two programs can be made strongly equivalent as shown in the following example.

  \vspace{\extop}\noindent
  \begin{tabular}{p{\excol}  p{\excol}}
    \textbf{Program 20.A} & \textbf{Program 20.B} \\
    \begin{tabularlstlisting}[language=input]
{ p }.
{ q }.
:- p, q.
:- not p, not q.
    \end{tabularlstlisting}
    &
    \begin{tabularlstlisting}[language=input]
p :- not q.
q :- not p.
:- p, q.
    \end{tabularlstlisting}
  \end{tabular}
  The strong equivalence of the two programs can successfully be verified by all theorem provers.
  \vampirep{} performs best on this example with both \cvc{} and \vampire{} being slightly slower.
  \princess{} and \zipperposition{} both take longer to verify the example but still do so in under one second.

  The following example consists of Program~\ref{eq:choice_negation} and the program obtained from replacing the constraint in Program~\ref{eq:choice_negation} by a simple rule.

  \vspace{\extop}\noindent
  \begin{tabular}{p{\excol}  p{\excol}}
    \textbf{Program 21.A} & \textbf{Program 21.B} \\
    \begin{tabularlstlisting}[language=input]
{ p }.
:- not p, q.
    \end{tabularlstlisting}
    &
    \begin{tabularlstlisting}[language=input]
{ p }.
p :- q.
    \end{tabularlstlisting}
  \end{tabular}
  Both configurations of \vampire{} prove the strong equivalence in a few milliseconds with \cvc{} being just slightly slower.
  \zipperposition{} and \princess{} again need a bit more time (with \princess{} being the slowest) but still verify the example in under a second.

  Next up are Programs~\ref{lst:transitive_rule}~and~\ref{lst:transitive_constraint} from Chapter~\ref{chap:introduction}, which define a transitive relation~\lstinline{p} for the set of values for which~\lstinline{q(X)} holds.

  \vspace{\extop}\noindent
  \begin{tabular}{p{\excol}  p{\excol}}
    \textbf{Program 22.A} & \textbf{Program 22.B} \\
    \begin{tabularlstlisting}[language=input]
{ p(X,Y) } :- q(X), q(Y).
p(X,Y) :- p(X,Z), p(Z,Y),
          q(X), q(Y), q(Z).
    \end{tabularlstlisting}
    &
    \begin{tabularlstlisting}[language=input]
{ p(X,Y) } :- q(X), q(Y).
:- p(X,Z), p(Z,Y), not p(X,Y),
   q(X), q(Y), q(Z).
    \end{tabularlstlisting}
  \end{tabular}
  \zipperposition{} is the only prover that does not manage to verify this example in under 300~s.
  \vampire{} takes several seconds but can be made much faster using parallelisation which brings the time to the same level as \cvc{}.
  \princess{} manages to find the proof faster than \vampire{} but cannot match \cvc{} or \vampirep{}.

  Program 23.A gives a complex way to express the fact~\lstinline{p} by using a choice over~\lstinline{p} as well as the rule \lstinline{p :- not p}.

  \vspace{\extop}\noindent
  \begin{tabular}{p{\excol}  p{\excol}}
    \textbf{Program 23.A} & \textbf{Program 23.B} \\
    \begin{tabularlstlisting}[language=input]
{ p }.
p :- not p.
    \end{tabularlstlisting}
    &
    \begin{tabularlstlisting}[language=input]
p.
    \end{tabularlstlisting}
  \end{tabular}
  All theorem prover manage to verify that Program~23.A is indeed strongly equivalent to the fact~\lstinline{p}.
  Again, \vampire{} (both configurations) does so in the least amount of time closely followed by \cvc{}, followed by \zipperposition{} and \princess{} again the slowest but still under one second.

  Finally, Program~24.B attempts to simplify Program~24.A by removing the constraint \lstinline{:- p(X), not q(X)} and instead adding the condition~\lstinline{q(X)} to the body of the choice rule.

  \vspace{\extop}\noindent
  \begin{tabular}{p{\excol}  p{\excol}}
    \textbf{Program 24.A} & \textbf{Program 24.B} \\
    \begin{tabularlstlisting}[language=input]
{ p(X) }.
:- p(X), not q(X).

    \end{tabularlstlisting}
    &
    \begin{tabularlstlisting}[language=input]
{ p(X) } :- q(X).
    \end{tabularlstlisting}
  \end{tabular}
  This is, however, not a strongly equivalent transformation, as simply adding the fact~\lstinline{p(1)} results in Program~24.A being unsatisfiable while Program~24.B has an answer set consisting of just~\lstinline{p(1)}.
  Thus, none of the theorem provers verifies the strong equivalence of the two programs.

  \begin{table}
    \begin{adjustbox}{max width=\textwidth}
      \begin{tabular}{| l *{6}{| c} |}
        \hline
         & \cvc & \princess & \vampire & \vampirep & \zipperposition \\
        \hline
        Example 20 & 0.017 s & 0.758 s & 0.028 s & 0.006 s & 0.275 s \\
        \hline
        Example 21 & 0.011 s & 0.707 s & 0.004 s & 0.005 s & 0.139 s \\
        \hline
        Example 22 & 0.022 s & 1.092 s & 5.21 s & 0.025 s & --- \\
        \hline
        Example 23 & 0.019 s & 0.596 s & 0.003 s & 0.005 s & 0.144 s \\
        \hline
      \end{tabular}
    \end{adjustbox}
    \caption{Results on examples of logic programs with simple choice rules}
    \label{tab:choices}
  \end{table}

\section{Summary}
  Overall the comparison does not show any of the theorem provers vastly outperforming the others.

  \vampire{} managed to verify the most examples, just running into a timeout for one example.
  \cvc{} only failed to do so on four of the examples with positive programs.
  \princess{} failed on seven examples, and \zipperposition{} failed on nine examples.

  In most examples, \cvc{} has the shortest run time.
  Using the parallelisation of \vampire{} brings significant improvements in most examples, often bringing the time onto a similar level as \cvc{} or even making \vampire{} faster.
  Both \princess{} and \zipperposition{} are most of the time quite a bit slower than \cvc{} and \vampire{} (with \princess{} being the slower one), but they still manage to verify most examples in one second or less.

  In conclusion, \cvc{} and \vampire{} seem to be the best fit for the kind of problems generated by \anthem{}.
  While both \princess{} and \zipperposition{} do not perform much worse, they are not a better choice (than \cvc{} or \vampire{}) in any of the examples and run into timeouts more often compared to \cvc{} and \vampire{}.
  Therefore, using \cvc{} and \vampire{} in conjunction with \anthem{} seems to be the best option.

  \chapter{Conclusion}
\label{chap:conclusion}
  This chapter summarises the contributions of this thesis to extend the translation tool \anthem{} to enable the verification of strong equivalence of non-positive logic programs.
  The contributions are compared to related work.
  Afterwards, possible areas of future work are briefly discussed.

\section{Contributions and Related Work}
  This section summarises the four main contributions of this thesis and compares them to related work.

  First, the new transformation~$\sigma^*$ was introduced.
  A similar transformation was first introduced by~\cite{PearceEtAl2001EncodingsEquilibriumLogic}.
  While parts of the formalisation in this thesis correspond to the one in~\cite{PearceEtAl2001EncodingsEquilibriumLogic} (i.e. Lemma~\ref{lem:satisfiability} in this thesis is similar to~\cite[Lemma~2]{PearceEtAl2001EncodingsEquilibriumLogic}), the transformation was used differently by~\cite{PearceEtAl2001EncodingsEquilibriumLogic}.
  It was only applied to expressions (i.e. formulas only consisting of atoms and the connectives $\land, \lor,$ and $\neg$) in order to express equilibrium models~\cite{Pearce1997NewLogicalCharacterisation} in classical logic.
  In this thesis, $\sigma^*$~is used to express the equivalence of two formulas in the logic of here-and-there in classical logic, which is formalised in Theorem~\ref{thm:ht_equivalence}.

  Second, the translation~$\tau^*$ was extended to cover an input language containing pools.
  Theorem~\ref{thm:strong_eq_classical} formalises how the extended~$\tau^*$ can be combined with~$\sigma^*$ to express the strong equivalence of two logic programs in classical logic.

  Third, a new version of \anthem{} implements the extended translation~$\tau^*$ as well as the transformation~$\sigma^*$ enabling the verification of strong equivalence of non-positive programs by using an automated theorem prover for classical logic.
  A similar system was developed by~\cite{ChenEtAl2005SELPSystemStudying}, which utilises a transformation similar to~$\sigma^*$.
  However, the supported input language is much more limited compared to the input language of \anthem{}.
  While disjunctions are supported, choices, comparisons, arithmetic expressions, intervals, and pools are not supported.
  Another difference is how the verification is done.
  The system of~\cite{ChenEtAl2005SELPSystemStudying} uses a SAT solver and can generate counter-examples for programs that are not strongly equivalent.

  Fourth, options for automated theorem provers which are compatible with \anthem{} are investigated.
  Four automated theorem provers (\cvc{}, \princess{}, \vampire{}, and \zipperposition{}) are tested on several examples.
  The examples include positive programs, programs with pools, which \anthem{} can translated as a result of the extensions of~$\tau^*$, as well as programs with negation and simple choices, which \anthem{} translates to classical logic by using~$\sigma^*$.
  \cvc{} and \vampire{} emerged as the best options to use with \anthem{}.
  \princess{} and \zipperposition{} both only manage to verify fewer examples compared to \cvc{} and \vampire{}.

\section{Future Work}
  Future work on \anthem{} is mainly concerned with extending the input language.
  Desirable language features to support include an extended version of choices and aggregates, as both are essential features in most applications of Answer Set Programming.
  Other possible features include disjunctions in rule heads, classical negation, and conditionals in rule bodies.

  For more complex logic programs it will most likely be necessary to assist the theorem provers in the verification process.
  A method to do so has already been implemented for the version of \anthem{} used to verify the correctness of logic programs~\cite{FandinnoEtAl2020VerifyingTightLogic}.
  Integrating this method into the verification of strong equivalence of logic programs is another direction of future work.

	\backmatter
  \bibliographystyle{apalike}
  \bibliography{references}
  \chapter*{Declaration of Authorship}
  I hereby declare that I have independently written the present thesis and have not used any sources other than those I have specified.
  \bigskip

  \noindent Hiermit erkläre ich, dass ich die vorliegende Arbeit selbstständig verfasst und keine anderen als die von mir angegebenen Quellen genutzt habe.
  \vspace{2 cm}

  \noindent\textbf{\theauthor}

  \noindent \theplace, \thedate

\end{document}